\newtheorem{theorem}{Theorem}
\newtheorem{definition}{Definition}
\newtheorem{lemma}{Lemma}
\renewcommand{\qed}{\hfill $\framebox(6,6){}$}
\renewenvironment{proof}{\par{\noindent \bf Proof:}}{\qed \par \smallskip}
\newcommand{\makeatitle}[4]{
	\newpage
	\noindent
	\begin{center}
  	\framebox[\textwidth]{
    	\vbox{
    		\vspace{4mm}
      		\hbox to 0.95\textwidth { {\bf \Large \hfill #1 \hfill} }
      		\vspace{2mm}      
      		\hbox to 0.95\textwidth { {\bf \Large \hfill Lecture #2: #3  \hfill} }
      		\vspace{2mm}      
      		\hbox to 0.95\textwidth { {\Large \hfill #4 \hfill} }
      		\vspace{4mm}
    	}
  	}
	\end{center}	
}
\newcommand{\squishlist}{
	\begin{list}{$\bullet$}{
		\setlength{\itemsep}{0pt}
		\setlength{\parsep}{3pt}
		\setlength{\topsep}{3pt}
		\setlength{\partopsep}{0pt}
		\setlength{\leftmargin}{1.5em}
		\setlength{\labelwidth}{1em}
		\setlength{\labelsep}{0.5em}
   }
}
\newcommand{\squishenum}{
	
	\begin{list}{scount}{\usecounter{scount}}{
		\setlength{\itemsep}{0pt}
		\setlength{\parsep}{3pt}
		\setlength{\topsep}{3pt}
		\setlength{\partopsep}{0pt}
		\setlength{\leftmargin}{1.5em}
		\setlength{\labelwidth}{1em}
		\setlength{\labelsep}{0.5em}
	}
}
\newcommand{\squishend}{
	\end{list}
}
\newcommand{\stitle}[1]{\vspace{0.0ex} \noindent{\bf #1}}
\newcommand{\csec}{Section~}
\newcommand{\csecs}{Sections~}
\newcommand{\cdef}{Definition~}
\newcommand{\cthm}{Theorem~}
\newcommand{\cthms}{Theorems~}
\newcommand{\cfig}{Figure~}
\newcommand{\cfigs}{Figures~}
\newcommand{\ie}{i.e.\xspace}
\newcommand{\eg}{e.g.\xspace}
\newcommand{\eat}[1]{}
\newcommand{\pr}[1]{{\bf Pr}\!\left[#1\right]\xspace}
\newcommand{\bigoh}[1]{{\rm O}\!\left(#1\right)\xspace}
\newcommand{\bigomega}[1]{{\rm \Omega}\!\left(#1\right)\xspace}
\newcommand{\eps}{{{\varepsilon}\xspace}}
\title{Collecting Telemetry Data Privately}
\author{
  Bolin Ding, Janardhan Kulkarni, Sergey Yekhanin \\
  Microsoft Research \\
  \texttt{\{bolind, jakul, yekhanin\}@microsoft.com}\\
}
\begin{document}

\maketitle

\begin{abstract}
The collection and analysis of telemetry data from user's devices is routinely performed by many software companies. Telemetry collection leads to improved user experience but poses significant risks to users' privacy. Locally differentially private (LDP) algorithms have recently emerged as the main tool that allows data collectors to estimate various population statistics, while preserving privacy. The guarantees provided by such algorithms are typically very strong for a single round of telemetry collection, but degrade rapidly when telemetry is collected regularly.
In particular, existing LDP algorithms are not suitable for repeated collection of counter data such as daily app usage statistics.

In this paper, we develop new LDP mechanisms geared towards repeated collection of counter data, with formal privacy guarantees even after being executed for an arbitrarily long period of time. For two basic analytical tasks, mean estimation and histogram estimation, our LDP mechanisms for repeated data collection provide estimates with comparable or even the same accuracy as existing single-round LDP collection mechanisms. We conduct empirical evaluation on real-world counter datasets to verify our theoretical results.

Our mechanisms have been deployed by Microsoft to collect telemetry across millions of devices.
\end{abstract}


\section{Introduction}
\label{sec:intro}

Collecting telemetry data to make more informed decisions is a commonplace. In order to meet users' privacy expectations and in view of tightening privacy regulations (\eg, European GDPR law) the ability to collect telemetry data privately is paramount. Counter data, \eg, daily app or system usage statistics reported in seconds, is a common form of telemetry. In this paper we are interested in algorithms that preserve users' privacy in the face of continuous collection of counter data, are accurate, and scale to populations of millions of users.

 \begin{wrapfigure}{r}{0.35\textwidth}
 	\vspace{-0.6cm}
 	\begin{center}
 		\includegraphics[width=0.40\textwidth]{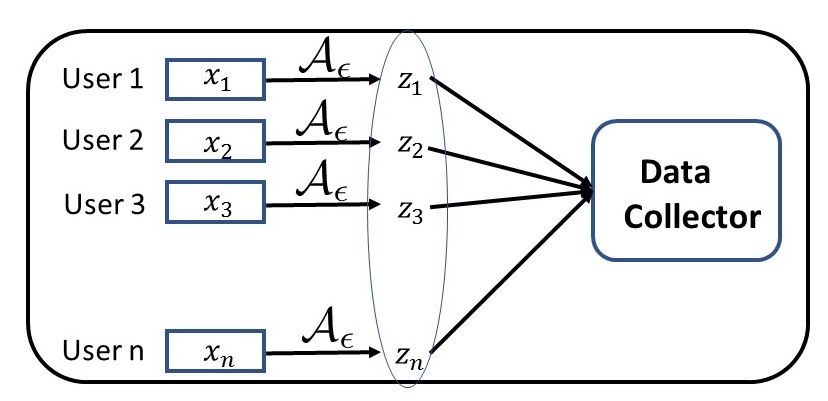}
 	\end{center}
 	\caption{Local Model of Differential Privacy}
 	\vspace{-0.5cm}
 \end{wrapfigure}

Recently, differential privacy~\cite{TCC06} (DP) has emerged as defacto standard for the privacy guarantees. In the context of telemetry collection one typically considers algorithms that exhibit differential privacy in {\em the local model}~\cite{ccs:ErlingssonPK14, pvldb:HuYYDCYGZ15, popets:FantiPE16, focs:DuchiJW13, focs:BassilyST14, NIPS:BNST017,Arxiv:TKBWW}, also called randomized response model \cite{warner1965randomized}, $\gamma$-amplification \cite{evfimievski2003limiting}, or FRAPP \cite{agrawal2005framework}. These are randomized algorithms that are invoked on user's device to turn user's private value into a response that is communicated to data collector and have the property that the likelihood of any specific algorithm's output varies little with the input, thus providing users with plausible deniability. Guarantees offered by locally differentially private algorithms, although very strong in a single round of telemetry collection, quickly degrade when data is collected over time.  This is a very challenging problem that limits the applicability of DP in many contexts.

In telemetry applications, privacy guarantees need to hold in the face of continuous data collection. Recently, in an influential paper~\cite{ccs:ErlingssonPK14} proposed a framework based on {\em memoization} to tackle this issue. Their techniques allow one to extend single round DP algorithms to continual data collection and protect users whose values stay constant or change very rarely. The key limitation of the work of~\cite{ccs:ErlingssonPK14} is that their approach cannot allow for even very small but frequent changes in users' private values, making it inappropriate for collecting counter data. In this paper, we address this limitation.

{\em We design mechanisms with formal privacy guarantees in the face of continuous collection of counter data. These guarantees are particularly strong when user's behavior remains approximately the same, varies slowly, or varies around a small number of values over the course of data collection.}

\textbf{Our results.} Our contributions are threefold.

\begin{itemize}
\item  We give simple $1$-bit response mechanisms in the local model of DP for single-round collection of counter data for mean and histogram estimation. Our mechanisms are inspired by those in \cite{warner1965randomized, nips:DuchiWJ13, focs:DuchiJW13, stoc:BassilyS15}, but allow for considerably simpler descriptions and implementations. Our experiments also demonstrate performance gains in concrete settings.

\item Our main technical contribution is a rounding technique called $\alpha$-point rounding that borrows ideas from approximation algorithms literature~\cite{goemans2002single, bansal2008improved}, and allows memoization to be applied in the context of private collection of counters while avoiding substantial losses in accuracy or privacy. We give a rigorous definition of privacy guarantees provided by our algorithms when the data is collected continuously for an arbitrarily long period of time. We also present empirical findings related to our privacy guarantees.

\item Finally, our mechanisms have been deployed by Microsoft across millions of devices starting with Windows Insiders in Windows 10 Fall Creators Update to protect users' privacy while collecting application usage statistics.
\end{itemize}


\subsection{Preliminaries and problem formulation}\label{sec:pre}

In our setup, there are $n$ {\em users}, and each user at time $t$ has a private (integer or real) counter with value $x_i(t)\in [0, m]$.  A {\em data collector} wants to collect these counter values $\{ x_i(t)\}_{i\in[n]}$ at each time stamp $t$ to do statistical analysis. For example, for the telemetry analysis, understanding the mean and the distribution of counter values (\eg, app usage) is very important to IT companies.

\stitle{Local model of differential privacy (LDP).} Users do not need to trust the data collector and require formal privacy guarantees before they are willing to communicate their values to the data collector. Hence, a more well-studied DP model \cite{TCC06, dwork2014algorithmic}, which first collects all users' data and then injects noise in the analysis step,  is not applicable in our setup.

In this work, we adopt the {\em local model of differential privacy}, where each user randomizes private data using a randomized algorithm (mechanism) $\mathcal{A}$ locally before sending it to data collector.

\begin{definition}[\cite{evfimievski2003limiting,nips:DuchiWJ13,stoc:BassilyS15}]
	A randomized algorithm $\mathcal{A}: \mathcal{V} \rightarrow \mathcal{Z}$ is $\epsilon$-locally differentially private ($\epsilon$-LDP) if for any pair of values $v, v' \in \mathcal{V}$ and any subset of output $S \subseteq \mathcal{Z},$ we have that \[\pr{\mathcal{A}(v) \in S} \leq e^\epsilon\cdot \pr{\mathcal{A}(v') \in S}.\]
\end{definition}

LDP formalizes a type of plausible deniability: no matter what output is released, it is approximately equally as likely to have come from one point $v \in \mathcal{V}$ as any other. For alternate interpretations of differential privacy within the framework of hypothesis testing we refer the reader to \cite{wasserman2010statistical,focs:DuchiJW13}.

\stitle{Statistical estimation problems.} We focus on two estimation problems in this paper.

{\em Mean estimation:} For each time stamp $t$,  the data collector wants to obtain an estimation $\hat{\sigma}(t)$ for $\sigma(\vec{x_t})=\frac{1}{n}\cdot\sum_{i\in [n]} x_i(t)$. The {\em error} of an estimation algorithm for mean is defined to be $ \max_{\vec{x_t} \in [m]^n}|\hat\sigma(\vec{x_t}) - \sigma(\vec{x_t})|$. In other words, we do worst case analysis. We  abuse notation and denote $\sigma(t)$ to mean $\sigma(\vec{x_t})$ for a fixed input $\vec{x_t}$.

{\em Histogram estimation:} Suppose the domain of counter values is partitioned into $k$ buckets (\eg, with equal widths), and a counter value $x_i(t) \in [0, m]$ can be mapped to a bucket number $v_i(t) \in [k]$. For each time stamp $t$, the data collector wants to estimate frequency of $v \in [k]:$ $h_t(v) = \frac{1}{n} \cdot |\{i : v_i(t) = v \}|$ as $\hat h_t(v)$. The {\em error} of a histogram estimation is measured by $\max_{v \in [k]}|\hat h_t(v) - h_t(v) |.$ Again, we do worst case analysis of our algorithms.

\subsection{Repeated collection and overview of privacy framework}
\label{sec:FraAndOrg}

\stitle{Privacy leakage in repeated data collection.}

Although LDP is a very strict notion of privacy, its effectiveness decreases if the data is collected repeatedly. If we collect counter values of a user $i$ for $T$ time stamps by executing an $\eps$-LDP mechanism $\cal A$ independently on each time stamp, $x_i(1)x_i(2) \ldots x_i(T)$ can be only guaranteed indistinguishable to another sequence of counter values, $x'_i(1)x'_i(2) \ldots x'_i(T)$, by a factor of up to $e^{T\cdot\eps}$, which is too large to be reasonable as $T$ increases.

Hence, in applications such as telemetry, where data is collected continuously, privacy guarantees provided by an LDP mechanism for a single round of data collection are not sufficient. We formalize our privacy guarantee to enhance LDP for repeated data collection later in \csec\ref{Sec:RepeatedCollection}. However, intuitively we ensure that every user blends with a large set of other users who have very different behaviors. Similar philosophy can be found in Blowfish privacy \cite{sigmod:HeMD14} which protects only a specified subset of pairs of neighborhood databases to trade-off privacy for utility.
On a different but relevant line of work about streaming model of DP \cite{innovations:DworkNPRY10}, the event-level private counting problem under continual observation is studied \cite{soda:Dwork10}, with almost tight upper bounds in error (polynomial in ${\log T}$) in \cite{stoc:DworkNPR10} and \cite{icalp:ChanSS10}.
\cite{pvldb:KellarisPXP14} proposes a weaker protection for continual events in $w$ consecutive timestamps to remove the dependency on the length of time period $T$.

\begin{wrapfigure}{r}{0.2\textwidth}
		\vspace{-0.6cm}
		\begin{center}
			\includegraphics[width=0.2\textwidth]{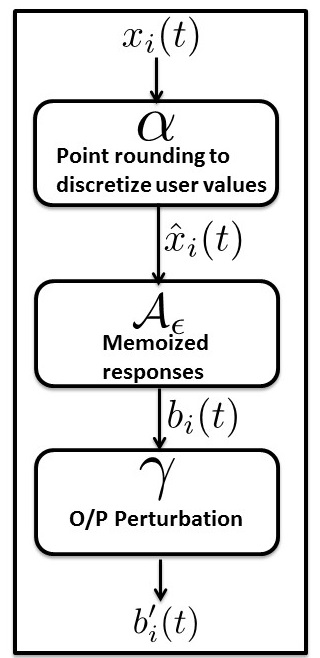}		
			\caption{Privacy Framework}
			\vspace{-1.2cm}
		\end{center}
	\end{wrapfigure}

\stitle{Our Privacy Framework and Guarantees.} Our framework for repeated private collection of counter data follows similar outline as the framework used in~\cite{ccs:ErlingssonPK14}. Our framework for mean and histogram estimation  has four main components:

1) An important building block for our overall solution are $1$-bit mechanisms that provide local $\epsilon$-LDP guarantees and good accuracy for a single round of data collection (Section~\ref{sec:mean}).

2) An $\alpha$-point rounding scheme to randomly discretize users private values prior to applying memoization (to conceal small changes) while keeping the expectation of discretized values intact (Section~\ref{Sec:RepeatedCollection}).

3) Memoization of discretized values using the $1$-bit mechanisms to avoid privacy leakage from repeated data collection (Section~\ref{Sec:RepeatedCollection}). In particular, if the counter value of a user remains approximately consistent, then the user is guaranteed $\epsilon$-differential privacy even after many rounds of data collection.

4) Finally, output perturbation (instantaneous noise in \cite{ccs:ErlingssonPK14}) to protect exposing the transition points due to large changes in user's behavior and attacks based on auxiliary information (Section~\ref{sec:opperturbation}).

We now describe these components in more detail focusing predominantly on mean estimation. Later, in Section~\ref{sec:exp} we present our experimental results and in Section~\ref{Sec:Deployment} we discuss some details of the deployment in Windows 10.


\section{Single-round LDP mechanisms for mean and histogram Estimation}\label{sec:mean}

We first describe our $1$-bit LDP mechanisms for mean and histogram estimation. Our mechanisms are inspired by the works of Duchi {\em et al.} \cite{nips:DuchiWJ13,focs:DuchiJW13,corr:DuchiWJ16} and Bassily and Smith \cite{stoc:BassilyS15}. However, our mechanisms are tuned for more efficient communication (by sending $1$ bit for each counter each time) and stronger protection in repeated data collection (introduced later in \csec\ref{Sec:RepeatedCollection}). To the best our knowledge, the exact form of mechanisms presented in this Section was not known. Our algorithms yield accuracy gains in concrete settings (see Section~\ref{sec:exp}) and are easy to understand and implement.

\subsection{1-Bit Mechanism for mean estimation}\label{sec:1bit:mean}

\stitle{Collection mechanism {\sf 1BitMean}:} When the collection of counter $x_i(t)$ at time $t$ is requested by the data collector, each user $i$ sends one bit $b_i(t)$, which is independently drawn from the distribution:
\begin{equation}\label{equ:meanmech}
b_i (t) =
\left\{
\begin{array}{ll}
1,& \mathrm{with \ probability\ } \frac{1}{e^\epsilon+1}+\frac{x_i(t)}{m}\cdot \frac{e^\epsilon -1}{e^\epsilon+1}; \\
0,& \mathrm{otherwise}.                                                           \\
\end{array}
\right.
\end{equation}

\stitle{Mean estimation.} Data collector obtains the bits $\{b_i(t)\}_{i\in [n]}$ from $n$ users and estimates $\sigma(t)$ as
\begin{equation}\label{equ:meanest}
\hat{\sigma}(t) = \frac{m}{n}\sum_{i=1}^n \frac{{b}_i(t) \cdot (e^\eps+1)-1}{e^\eps-1}.
%
\end{equation}

The basic randomizer of \cite{stoc:BassilyS15} is equivalent to our 1-bit mechanism for the case when each user takes values either $0$ or $m$. The above mechanism can also be seen as a simplification of the multidimensional mean-estimation mechanism given in \cite{focs:DuchiJW13}. For the 1-dimensional mean estimation, Duchi {\em et al.} \cite{focs:DuchiJW13} show that Laplace mechanism is {\em asymptotically optimal} for the mini-max error. However, the communication cost per user in Laplace mechanism is $\Omega(\log m)$ bits, and our experiments show it also leads to larger error compared to our 1-bit mechanism.
We prove following results for the above 1-bit mechanism.
\begin{theorem}\label{thm:singlemean}
	For single-round data collection, the mechanism {\sf 1BitMean} in \eqref{equ:meanmech} preserves $\epsilon$-LDP for each user. Upon receiving the $n$ bits $\{b_i(t)\}_{i\in [n]}$, the data collector can then estimate the mean of counters from $n$ users as $\hat{\sigma}(t)$ in \eqref{equ:meanest}. With probability at least $1-\delta$, we have \[|\hat{\sigma}(t) - \sigma(t)| \leq \frac{m}{\sqrt{2n}} \cdot \frac{e^\eps+1}{e^\eps-1} \cdot \sqrt{\log\frac{2}{\delta}}.\]
\end{theorem}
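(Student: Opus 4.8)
The plan is to prove the three assertions separately: the $\eps$-LDP guarantee, the (implicit) unbiasedness of $\hat\sigma(t)$, and the high-probability error bound, with the last established via Hoeffding's inequality.

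For privacy, since the mechanism emits a single bit, it suffices to check the likelihood-ratio condition on the two outputs $b=1$ and $b=0$. Writing $p(v) = \frac{1}{e^\eps+1} + \frac{v}{m}\cdot\frac{e^\eps-1}{e^\eps+1}$ for the probability of emitting $1$ on input $v$, I would observe that $p$ is increasing in $v$ and maps $[0,m]$ onto $\left[\frac{1}{e^\eps+1}, \frac{e^\eps}{e^\eps+1}\right]$. Hence over all $v,v'\in[0,m]$ the ratio $p(v)/p(v')$ is largest at $v=m,\, v'=0$, where it equals exactly $e^\eps$; the analogous ratio $(1-p(v))/(1-p(v'))$ governing the output $0$ is bounded by $e^\eps$ by the same extremal argument. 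This verifies $\eps$-LDP for both singleton events.

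For unbiasedness, I would take the expectation of the per-user summand in \eqref{equ:meanest}. Using $\ep{b_i(t)}=p(x_i(t))$, a one-line simplification gives $p(x_i(t))(e^\eps+1)-1 = \frac{x_i(t)}{m}(e^\eps-1)$, so each summand has expectation $x_i(t)/m$ and therefore $\ep{\hat\sigma(t)}=\sigma(t)$. This is what makes $\hat\sigma(t)$ a sensible estimator and identifies $\sigma(t)$ as the quantity from which the concentration bound measures deviation.

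The error bound is the main step. Setting $Y_i = \frac{b_i(t)(e^\eps+1)-1}{e^\eps-1}$, we have $\hat\sigma(t)=\frac{m}{n}\sum_i Y_i$ with the $Y_i$ independent across users. Because $b_i(t)\in\{0,1\}$, each $Y_i$ takes only the two values $\frac{-1}{e^\eps-1}$ and $\frac{e^\eps}{e^\eps-1}$, so each scaled term $\frac{m}{n}Y_i$ lies in an interval of width $\frac{m}{n}\cdot\frac{e^\eps+1}{e^\eps-1}$. Applying Hoeffding's inequality to the mean-zero sum $\hat\sigma(t)-\sigma(t)=\sum_i \frac{m}{n}(Y_i-\ep{Y_i})$ yields a tail bound $2\exp\!\left(-\frac{2\lambda^2 n}{m^2}\bigl(\frac{e^\eps-1}{e^\eps+1}\bigr)^2\right)$; equating this with $\delta$ and solving for $\lambda$ reproduces the stated bound $\frac{m}{\sqrt{2n}}\cdot\frac{e^\eps+1}{e^\eps-1}\cdot\sqrt{\log\frac{2}{\delta}}$, where the factor $\sqrt{2}$ traces back to the constant in Hoeffding's exponent. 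The only place demanding care is correctly computing the width of the support of each $Y_i$, since that width enters the exponent squared and fixes the constant $\frac{e^\eps+1}{e^\eps-1}$; the remaining algebra is routine.
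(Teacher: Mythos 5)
Your proposal is correct and follows essentially the same route as the paper: the privacy claim via the likelihood ratio of the single-bit outputs, unbiasedness of $\hat\sigma(t)$ by linearity, and the error bound via Hoeffding with the factor $\frac{e^\eps+1}{e^\eps-1}$ arising from the support width. The only cosmetic difference is that you apply Hoeffding to the affinely transformed summands $Y_i$ directly, while the paper applies it to the raw bits $\sum_i b_i(t)$ and then rescales the deviation through the same linear map --- an algebraically equivalent computation.
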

We establish a few lemmas first and then prove Theorem~\ref{thm:singlemean}.
\begin{lemma}\label{lem:Privacy}
	The algorithm {\sf 1BitMean} preserves $\epsilon$-DP of every user.
\end{lemma}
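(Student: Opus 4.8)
The plan is to verify the defining inequality of $\epsilon$-LDP directly from the output distribution of {\sf 1BitMean}. Since the mechanism emits a single bit, the output space is $\mathcal{Z} = \{0,1\}$, and it suffices to bound the likelihood ratio for each of the two possible outputs separately, over all pairs of private values $x, x' \in [0,m]$. Writing $p(x) = \pr{b = 1 \mid x} = \frac{1}{e^\epsilon+1} + \frac{x}{m}\cdot\frac{e^\epsilon-1}{e^\epsilon+1}$, I first observe that as $x$ ranges over $[0,m]$, the factor $\frac{x}{m}$ ranges over $[0,1]$, so $p(x)$ ranges over the interval $\left[\frac{1}{e^\epsilon+1},\, \frac{e^\epsilon}{e^\epsilon+1}\right]$, with the minimum attained at $x=0$ and the maximum at $x=m$.

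The key step is then to bound the two ratios $\frac{p(x)}{p(x')}$ (for the output $1$) and $\frac{1-p(x)}{1-p(x')}$ (for the output $0$) over all admissible pairs. For output $1$, the worst case is when the numerator is maximal and the denominator minimal, giving $\frac{p(x)}{p(x')} \le \frac{e^\epsilon/(e^\epsilon+1)}{1/(e^\epsilon+1)} = e^\epsilon$. For output $0$, note $1-p(x)$ ranges over the same interval $\left[\frac{1}{e^\epsilon+1},\,\frac{e^\epsilon}{e^\epsilon+1}\right]$ (since $1-p(x) = \frac{e^\epsilon}{e^\epsilon+1} - \frac{x}{m}\cdot\frac{e^\epsilon-1}{e^\epsilon+1}$ is just the reflection), so the same computation yields $\frac{1-p(x)}{1-p(x')} \le e^\epsilon$. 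Since both single-output ratios are bounded by $e^\epsilon$, the definition's inequality $\pr{\mathcal{A}(v)\in S}\le e^\epsilon \pr{\mathcal{A}(v')\in S}$ extends to every subset $S \subseteq \{0,1\}$ by additivity (the only nontrivial subsets are the singletons).

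I do not anticipate a genuine obstacle here; the lemma is essentially a direct computation. The one thing I would be careful about is confirming that the extreme values of $x$ simultaneously optimize both numerator and denominator in the intended direction, and that the range of $\frac{x}{m}$ is exactly $[0,1]$ so that the endpoints $\frac{1}{e^\epsilon+1}$ and $\frac{e^\epsilon}{e^\epsilon+1}$ are the true min and max of $p(x)$. The mild elegance of the argument is that the probability of $1$ is an affine increasing function of $x$ calibrated precisely so that its range is symmetric about $\tfrac12$ and spans exactly the factor $e^\epsilon$ between its endpoints — which is what makes the $\epsilon$-LDP bound tight for this construction.
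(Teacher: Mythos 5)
Your proof is correct and takes essentially the same approach as the paper's: both argue that $\Pr[b=1\mid x]$ (and symmetrically $\Pr[b=0\mid x]$) ranges over $\left[\frac{1}{e^\epsilon+1},\,\frac{e^\epsilon}{e^\epsilon+1}\right]$ as $x$ varies over $[0,m]$, so every likelihood ratio is at most $e^\epsilon$. You merely make explicit two details the paper leaves implicit --- the affine monotonicity pinning down the extremes at $x=0$ and $x=m$, and the extension from singleton outputs to all subsets of $\{0,1\}$ by additivity.
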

\begin{proof}
	Observe that each user contributes only a single bit $b_i$ to data collector. By formula~\eqref{equ:meanmech} the probability that $b_i=0$ varies from $\frac{1}{e^\epsilon+1}$ to $\frac{e^\epsilon}{e^\epsilon+1}$ depending on the private value $x_i.$ Similarly, the probability that $b_i=1$ varies from $\frac{1}{e^\epsilon+1}$ to $\frac{e^\epsilon}{e^\epsilon+1}$ with $x_i.$ Thus the ratios of respective probabilities for different values of $x_i$ can be at most $e^\epsilon.$
\end{proof}

Recall the definition of $\sigma$.
\begin{equation}
\label{Eqn:Sigma}
\sigma = \frac{1}{n} \sum_{i}x_i
\end{equation}

\begin{lemma}\label{Lemma:Unbiased}
	$\hat \sigma (t)$ in Equation (\ref{equ:meanest}) is an unbiased estimator for $\sigma.$
\end{lemma}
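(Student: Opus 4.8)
The plan is to verify directly that $\ep{\hat\sigma(t)} = \sigma$, where the expectation is taken over the internal randomness of the mechanism while the private values $\{x_i(t)\}_{i\in[n]}$ are held fixed. The crucial structural observation is that the estimator in \eqref{equ:meanest} is an affine function of the reported bits $\{b_i(t)\}_{i\in[n]}$, so by linearity of expectation the entire computation collapses to understanding a single quantity: $\ep{b_i(t)}$ for an arbitrary user $i$.

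First I would note that each $b_i(t)$ is a Bernoulli random variable, so its expectation equals its probability of being $1$, which can be read off directly from \eqref{equ:meanmech}:
\[
\ep{b_i(t)} = \frac{1}{e^\eps+1} + \frac{x_i(t)}{m}\cdot\frac{e^\eps-1}{e^\eps+1}.
\]
The key step is then to substitute this into the per-user summand of \eqref{equ:meanest} and simplify. Multiplying through by $(e^\eps+1)$ gives $\ep{b_i(t)}\,(e^\eps+1) = 1 + \frac{x_i(t)}{m}(e^\eps-1)$, so the additive $1$ cancels against the $-1$ in the numerator and the factor $(e^\eps-1)$ cancels against the denominator, leaving exactly $x_i(t)/m$. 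In other words, the affine constants appearing in the estimator are calibrated precisely to invert the bias that the randomizer injects into each bit.

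Assembling these pieces by linearity of expectation then yields
\[
\ep{\hat\sigma(t)} = \frac{m}{n}\sum_{i=1}^n \frac{x_i(t)}{m} = \frac{1}{n}\sum_{i=1}^n x_i(t) = \sigma,
\]
recalling the definition of $\sigma$ in \eqref{Eqn:Sigma}. I do not anticipate any genuine obstacle: the argument amounts to one expectation of a Bernoulli variable followed by elementary cancellation of constants. The only point requiring care is tracking those constants closely enough to see that the calibration is \emph{exact}, not merely approximate, so that the estimator is truly unbiased rather than asymptotically so. The substantive work therefore lies in the design of the coefficients in \eqref{equ:meanest}, which this computation simply confirms.
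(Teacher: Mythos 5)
Your proof is correct and follows essentially the same route as the paper's: both compute $\mathbf{E}[b_i(t)]$ directly from the distribution in \eqref{equ:meanmech}, apply linearity of expectation to the affine estimator \eqref{equ:meanest}, and observe that the constants cancel exactly to leave $\frac{1}{n}\sum_i x_i(t) = \sigma(t)$. The only difference is presentational — you simplify the per-user summand before summing, while the paper factors out the constants and simplifies the sum globally — which is immaterial.
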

\begin{proof}
	Observe that
	\begin{eqnarray}
	\mathbf{E}[\hat \sigma (t)] & = & \frac{m}{n}\cdot \frac{e^\epsilon+1}{e^\epsilon-1}\cdot \left(\sum_{i\in [n]}\mathbf{E}[b_i(t)]-\frac{n}{e^\epsilon+1}\right) \nonumber \\
	& = &  \frac{m}{n}\cdot \frac{e^\epsilon+1}{e^\epsilon-1}\cdot \left( \sum_{i\in [n]}\frac{x_i(t)}{m}\cdot \frac{e^\epsilon -1}{e^\epsilon+1} \right) \nonumber \\
	& = &  \sigma(t). \nonumber
	\end{eqnarray}
\end{proof}

\begin{lemma}\label{lem:Accuracy}
	Let $\hat \sigma(t)$ and $\sigma$ be as in~Equations (\ref{equ:meanest}, \ref{Eqn:Sigma}). Let $\theta\in (0,1)$ be arbitrary. We have
	\begin{equation}\label{Eqn:Main}
	\mathrm{Pr}\left[\left|\hat \sigma(t) - \sigma(t)\right|\geq \theta m \right]\leq 2\cdot e^{-2\theta^2\cdot n\cdot\left(\frac{e^\epsilon-1}{e^\epsilon+1}\right)^2}.
	\end{equation}
\end{lemma}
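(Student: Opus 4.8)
The plan is to apply Hoeffding's inequality to the estimator $\hat\sigma(t)$, viewed as a sum of $n$ independent bounded random variables. First I would write
\[
\hat\sigma(t)=\sum_{i=1}^n Y_i, \qquad Y_i := \frac{m}{n}\cdot\frac{b_i(t)(e^\eps+1)-1}{e^\eps-1}.
\]
The summands $Y_i$ are independent, since by \eqref{equ:meanmech} the bits $b_i(t)$ are drawn independently across users.

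Next I would determine the range of each $Y_i$. Because $b_i(t)\in\{0,1\}$, each $Y_i$ takes exactly two values: it equals $\frac{-m}{n(e^\eps-1)}$ when $b_i(t)=0$, and $\frac{m\,e^\eps}{n(e^\eps-1)}$ when $b_i(t)=1$. Hence every $Y_i$ lies in an interval of length
\[
\frac{m\,e^\eps}{n(e^\eps-1)}-\frac{-m}{n(e^\eps-1)}=\frac{m(e^\eps+1)}{n(e^\eps-1)}.
\]

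With the per-user range in hand, Hoeffding's inequality for a sum of $n$ independent variables confined to intervals of this common length yields
\[
\mathrm{Pr}\left[\left|\hat\sigma(t)-\mathbf{E}[\hat\sigma(t)]\right|\geq\theta m\right]\leq 2\exp\left(-\frac{2(\theta m)^2}{n\left(\frac{m(e^\eps+1)}{n(e^\eps-1)}\right)^2}\right).
\]
By Lemma~\ref{Lemma:Unbiased}, $\mathbf{E}[\hat\sigma(t)]=\sigma(t)$, so the left-hand side is precisely the quantity bounded in \eqref{Eqn:Main}. It then remains only to simplify the exponent: the denominator equals $\frac{m^2(e^\eps+1)^2}{n(e^\eps-1)^2}$, and cancelling the $m^2$ together with one factor of $n$ reduces the exponent to $-2\theta^2 n\left(\frac{e^\eps-1}{e^\eps+1}\right)^2$, matching the claim.

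I do not expect any genuine obstacle here; the argument is a direct invocation of Hoeffding once the estimator is recognized as a sum of independent bounded terms. The only place to be careful is computing the span $\frac{m(e^\eps+1)}{n(e^\eps-1)}$ of each summand correctly and tracking the $m$ and $n$ factors through the cancellation in the exponent. I would therefore present the range computation explicitly and keep the rest terse.
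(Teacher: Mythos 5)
Your proof is correct and takes essentially the same approach as the paper: the paper applies the Chernoff--Hoeffding bound to the raw bits $b_i(t)$ as $\{0,1\}$-valued variables and then rescales the deviation through the affine estimator map (choosing $t=\theta n\cdot\frac{e^\epsilon-1}{e^\epsilon+1}$), whereas you absorb that same affine map into the summands $Y_i$ before invoking Hoeffding --- an equivalent bookkeeping choice producing the identical exponent. Your range computation $\frac{m(e^\epsilon+1)}{n(e^\epsilon-1)}$ and the appeal to Lemma~\ref{Lemma:Unbiased} for centering are both accurate, so nothing is missing.
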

\begin{proof}
	Clearly, for every $i\in [n],$ we have
	$$\mathbf{E}[b_i(t)] = \frac{1}{e^\epsilon+1}+\frac{x_i(t)}{m}\cdot \frac{e^\epsilon -1}{e^\epsilon+1}.$$
	Let
	\begin{equation}\label{Eqn:Mu}
	\mu = \mathbf{E}\left(\sum_{i\in [n]} b_i(t)\right) = \frac{n}{e^\epsilon+1} + \frac{n\sigma(t)}{m}\cdot \frac{e^\epsilon -1}{e^\epsilon+1}.
	\end{equation}
	Applying the Chernoff-Hoeffding bound~\cite[Theorem 2.8]{boucheron2013concentration} to independent $\{0,1\}$-random variables $\{b_i(t)\}_{i\in n},$ for all $t>0,$ we have
	\begin{equation}\label{Eqn:Chernoff_0}
	\mathrm{Pr}\left[\left|\sum_{i\in [n]}b_i(t) - \mu\right|\geq t\right]\leq 2\cdot e^{-\frac{2t^2}{n}}.
	\end{equation}
	Combining~(\ref{Eqn:Chernoff_0}) and~(\ref{Eqn:Mu}) we get
	\begin{equation}\label{Eqn:Chernoff_1}
	\mathrm{Pr}\left[\left|\sum_{i\in [n]}b_i(t) - \frac{n}{e^\epsilon+1} - \frac{n\sigma}{m}\cdot \frac{e^\epsilon -1}{e^\epsilon+1} \right|\geq t\right]\leq  e^{-\frac{2t^2}{n}}.
	\end{equation}
	Combining~(\ref{Eqn:Chernoff_1}),~(\ref{Eqn:Sigma}), and~(\ref{equ:meanest}) we conclude
	\begin{equation}\label{Eqn:Chernoff_2}
	\mathrm{Pr}\left[\left|\hat \sigma(t) - \sigma(t)\right|\geq t\cdot \frac{m}{n}\cdot \frac{e^\epsilon +1}{e^\epsilon -1}\right]\leq 2\cdot  e^{-\frac{2t^2}{n}}.
	\end{equation}
	Thus setting $t=\theta n \cdot \frac{e^\epsilon -1}{e^\epsilon +1}$ we obtain
	\begin{equation}\label{Eqn:Chernoff_3}
	\mathrm{Pr}\left[\left|\hat \sigma(t) - \sigma(t)\right|\geq \theta m \right]\leq 2\cdot e^{-2\theta^2\cdot n\cdot\left(\frac{e^\epsilon-1}{e^\epsilon+1}\right)^2},
	\end{equation}
	which concludes the proof.
\end{proof}

\begin{proof}[Proof of Theorem \ref{thm:singlemean}]
For any $\delta \in [0,1]$, set $\delta =  2\cdot e^{-2\theta^2\cdot n\cdot\left(\frac{e^\epsilon-1}{e^\epsilon+1}\right)^2}$. Then, error 
$$\theta m \leq \frac{m}{\sqrt{2n}} \cdot \frac{e^\eps+1}{e^\eps-1} \cdot \sqrt{\log\frac{2}{\delta}}.$$ 
This fact combined with Lemmas (\ref{lem:Privacy}, \ref{lem:Accuracy}) completes the proof.
\end{proof}

\subsection{$d$-Bit Mechanism for histogram estimation}
\label{sec:1bit:histogram}

Now we consider the problem of estimating histograms of counter values in a discretized domain with $k$ buckets with LDP to be guaranteed.

This problem has extensive literature both in computer science and statistics, and dates back to the seminal work Warner \cite{warner1965randomized}; we refer the readers to following excellent papers \cite{icalp:HsuKR12, nips:DuchiWJ13, stoc:BassilyS15, kairouz2016discrete} for more information.
Recently, Bassily and Smith \cite{stoc:BassilyS15} gave asymptotically tight results for the problem in the worst-case model building on the works of \cite{icalp:HsuKR12}. On the other hand, Duchi {\em et al.} \cite{nips:DuchiWJ13} introduce a mechanism by adapting Warner's classical randomized response mechanism in \cite{warner1965randomized}, which is shown to be optimal for the statistical mini-max regret if one does not care about the cost of communication.

The generic mechanism introduced in \cite{stoc:BassilyS15} can be used to reduce the communication cost in Duchi {\em et al.}'s mechanism to 1 bit per user, which however only works for $\eps \leq 2$.\footnote{\cite{stoc:BassilyS15} requires $\eps \leq \ln 2$ but we can optimize the parameters to loose the constraint to $\eps \leq 2$.} Another major technical component in \cite{stoc:BassilyS15} is the use of {\em Johnson-Lindenstrauss} lemma to make the communication cost polynomial in $\log k$. This component seems very difficult to be used in practice, because it requires from each user $\bigoh{nk}$ storage per counter, and/or $\bigoh{nk}$ time per collection. In our applications, $n$ (the number of users) is order of millions, and thus makes their mechanism prohibitively expensive. \cite{kairouz2016discrete} generalizes Warner's randomized response mechanism from binary to $k$-ary, which is close to optimal for large $\eps$ but sub-optimal for small $\eps$.

Therefore, in order to have a smooth trade-off between accuracy and communication cost (as well as the ability to protect privacy in repeated data collection, which will be introduced in \csec\ref{Sec:RepeatedCollection}) we introduce a modified version of Duchi {\em et al.}'s mechanism \cite{nips:DuchiWJ13} based on subsampling by buckets.

\stitle{Collection mechanism {\sf $d$BitFlip}:} Each user $i$ randomly draws $d$ bucket numbers without replacement from $[k]$, denoted by $j_1, j_2, \ldots, j_d$. When the collection of discretized bucket number $v_i(t) \in [k]$ at time $t$ is requested by the data collector, each user $i$ sends a vector:
\begin{align*}
& b_i(t) = \left[(j_1, b_{i, j_1}(t)), (j_2, b_{i, j_2}(t)), \ldots, (j_d, b_{i, j_d}(t))\right], ~\text{where $b_{i, j_p}(t)$ is a random 0-1 bit,}
\\
& \text{with}~ \pr{b_{i, j_p}(t) = 1} =
\begin{cases}
{e^{\eps/2}}/{(e^{\eps/2}+1)} & \text{if}~ v_i(t) = j_p
\\
{1}/{(e^{\eps/2}+1)} & \text{if}~ v_i(t) \neq j_p
\end{cases}, \text{~for $p=1,2,\ldots,d$}.
\end{align*}
Under the same public coin model as in \cite{stoc:BassilyS15}, each user $i$ only needs to send to the data collector $d$ bits $b_{i, j_1}(t)$, $b_{i, j_2}(t)$, $\ldots$, $b_{i, j_d}(t)$ in $b_i(t)$, as $j_1, j_2, \ldots, j_d$ can be generated using public coins.

\stitle{Histogram estimation.} Data collector estimates histogram $h_t$ as: for $v \in [k],$
\begin{equation}\label{equ:histogramest}
\hat{h}_t(v) = \frac{k}{nd} \sum_{{b}_{i,v}(t)~\text{is received}} \!\!\!\!\!\!\!\!\!\! \frac{{b}_{i,v}(t) \cdot (e^{\eps/2}+1)-1}{e^{\eps/2}-1}.
\end{equation}

When $d = k$, {\sf $d$BitFlip} is exactly the same as the one in in Duchi {\em et al.}\cite{nips:DuchiWJ13}. The privacy guarantee is straightforward. In terms of the accuracy, the intuition is that for each bucket $v \in [k]$, there are roughly $nd/k$ users responding with a 0-1 bit $b_{i,v}(t)$. We can prove the following result.
\begin{theorem}\label{thm:singlehistogram}
	For single-round data collection, the mechanism {\sf $d$BitFlip} preserves $\epsilon$-LDP for each user. Upon receiving the $d$ bits $\{b_{i,j_p}(t)\}_{p\in[d]}$ from each user $i$, the data collector can then estimate then histogram $h_t$ as $\hat h_t$ in \eqref{equ:histogramest}. With probability at least $1-\delta$, we have,
	\[
	\max_{v \in [k]}|h_t(v) - \hat h_t(v)| \leq \sqrt{\frac{5k}{nd}} \cdot \frac{e^{\eps/2}+1}{e^{\eps/2}-1} \cdot \sqrt{\log\frac{6k}{\delta}} \leq \bigoh{\sqrt{\frac{k\log(k/\delta)}{\eps^2 nd}}}.
	\]
\end{theorem}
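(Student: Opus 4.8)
The plan is to mirror the analysis of Theorem~\ref{thm:singlemean}, handling privacy first and then accuracy via a Chernoff--Hoeffding argument combined with a union bound over the $k$ buckets. For privacy, I would observe that each user reveals only the $d$ bits $b_{i,j_p}(t)$, and each such bit is drawn from a distribution whose probability of being $1$ is either $e^{\eps/2}/(e^{\eps/2}+1)$ or $1/(e^{\eps/2}+1)$, so the likelihood ratio of any single bit across two possible private values $v_i(t)$ is at most $e^{\eps/2}$. Since the $d$ sampled bucket indices $j_1,\dots,j_d$ are chosen independently of the private value (via public coins), changing $v_i(t)$ flips the bias of \emph{at most two} of the reported bits (the bit at the old bucket and the bit at the new bucket); the remaining bits have identical distributions. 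Hence the joint likelihood ratio over all $d$ bits is at most $e^{\eps/2}\cdot e^{\eps/2}=e^{\eps}$, giving $\eps$-LDP. This is the clean reason the per-bit parameter is $\eps/2$ rather than $\eps$.

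For accuracy, I would fix a bucket $v\in[k]$ and analyze $\hat h_t(v)$. First I would check unbiasedness: conditioned on which users report bucket $v$ (an event determined by the public coins, occurring with probability $d/k$ per user), each received bit $b_{i,v}(t)$ has expectation $1/(e^{\eps/2}+1)+\mathbf{1}[v_i(t)=v]\cdot(e^{\eps/2}-1)/(e^{\eps/2}+1)$, so the rescaling in~\eqref{equ:histogramest} makes $\hat h_t(v)$ an unbiased estimator of $h_t(v)$, exactly as in Lemma~\ref{Lemma:Unbiased}. The number of users reporting a given bucket, call it $N_v$, is itself random with $\mathbf{E}[N_v]=nd/k$; I would either condition on $N_v$ and apply Hoeffding to the $N_v$ independent reported bits, or apply a two-sided concentration bound to $N_v$ first to argue $N_v=\Theta(nd/k)$ with high probability. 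The dominant fluctuation comes from the sum of the $\{0,1\}$ bits: a Hoeffding bound analogous to~\eqref{Eqn:Chernoff_0} gives a deviation of order $\sqrt{N_v}\cdot(m\text{-free factor})$, which after the $\frac{k}{nd}\cdot\frac{e^{\eps/2}+1}{e^{\eps/2}-1}$ rescaling yields a per-bucket error of order $\sqrt{\tfrac{k}{nd}}\cdot\frac{e^{\eps/2}+1}{e^{\eps/2}-1}\cdot\sqrt{\log(1/\delta')}$.

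To get the stated uniform bound over all buckets, I would set the per-bucket failure probability to $\delta'=\delta/(3k)$ (the factor $3$ and the constant $5$ inside the square root absorb the additional contribution from the randomness of $N_v$) and take a union bound over the $k$ buckets, producing the $\log(6k/\delta)$ term. The final simplification to $\bigoh{\sqrt{k\log(k/\delta)/(\eps^2 nd)}}$ uses the elementary estimate $\frac{e^{\eps/2}+1}{e^{\eps/2}-1}=\Theta(1/\eps)$ for small $\eps$.

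\textbf{Main obstacle.} The routine piece is the per-bucket Hoeffding bound; the subtle piece is that the estimator sums over a \emph{random} set of reporters rather than a fixed set, so the number of bits contributing to each $\hat h_t(v)$ is a binomial random variable $N_v$ concentrated around $nd/k$. The cleanest route is to handle the two sources of randomness (which users report bucket $v$, and what those users report) together, either by conditioning on $N_v$ and then bounding its deviation, or by directly viewing each user's contribution to $\hat h_t(v)$ as a bounded independent random variable (zero if the user does not sample $v$, and a bounded rescaled bit otherwise) and applying a single Hoeffding bound. Getting the constants to land exactly at $\sqrt{5k/(nd)}$ and $\log(6k/\delta)$ is where the careful bookkeeping lives; the conceptual content is entirely in the privacy argument and the union bound.
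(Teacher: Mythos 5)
Your proposal follows essentially the same route as the paper: the paper likewise fixes a bucket $v$, splits the per-bucket error into the sampling fluctuation of the random set $U(v)$ of reporters (via an intermediate empirical frequency $h'_t(v)$ among the $n_v$ users who report bucket $v$) and the randomized-response noise in the reported bits, applies Hoeffding to each piece, and finishes with a triangle inequality and a union bound over the $k$ buckets; your privacy argument (only the bits at the old and new bucket change bias, each by a factor at most $e^{\eps/2}$, hence $e^{\eps}$ overall) is precisely why the construction uses $\eps/2$ per bit, a point the paper dismisses as ``straightforward from the construction.'' One caveat on your second fallback option: a single Hoeffding bound applied to the per-user bounded contributions (zero with probability $1-d/k$, otherwise a rescaled bit of magnitude $\Theta(\frac{k}{nd\eps})$) yields a deviation of order $\frac{k}{d\eps}\sqrt{\frac{\log(1/\delta)}{n}}$, which is worse than the claimed $\frac{1}{\eps}\sqrt{k/(nd)}$ rate by a factor of $\sqrt{k/d}$, because Hoeffding ignores that each contribution is nonzero only with probability $d/k$; that one-shot route needs a variance-sensitive inequality (Bernstein) rather than Hoeffding. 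Your primary route --- condition on $N_v \approx nd/k$, apply Hoeffding to the reported bits, and separately control the deviation of $N_v$ --- is sound and matches the paper's proof in structure and in level of bookkeeping (the paper also only derives the big-O form per bucket and does not track the constants $5$ and $6k$ explicitly).
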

\begin{proof}
The privacy guarantee of our algorithm is straightforward from the construction. To analyze the error bound $|h_t(v) - \hat h_t(v)|$ for each $v \in [k]$, let us consider the set of users $U(v)$ each of whom sends $(v, {b}_{i,v}(t))$ to the data collector. Let $n_v = |U(v)|$ and based on how each user chooses $j_1, \ldots, j_d$, we know $n_v=\frac{k}{nd}$ in expectation. Consider $h'_t(v) = \frac{1}{n_v} \cdot |\{i : v_i(t) = v ~\text{and}~ i \in U(v)\}|$; since $U(v)$ can be considered as a uniform random sample from $[n]$,  we can show using the Hoeffding’s inequality that
\[
|h_t(v) - h'_t(v)| \leq \bigoh{\sqrt{\frac{k\log(1/\delta)}{nd}}} ~\text{with probability at least}~ 1-\delta/2.
\]
From \eqref{equ:histogramest} and, again from the Hoeffding’s inequality, we have
\[
|h'_t(v) - \hat{h}_t(v)| \leq \bigoh{\sqrt{\frac{k\log(1/\delta)}{\eps^2 nd}}} ~\text{with probability at least}~ 1-\delta/2.
\]
Putting them together, and using the union bound and the triangle inequality, we have
\[
|h_t(v) - \hat{h}_t(v)| \leq \bigoh{\sqrt{\frac{k\log(1/\delta)}{\eps^2 nd}}} ~\text{with probability at least}~ 1-\delta.
\]
The bound of $\max_{v \in [k]}|h_t(v) - \hat{h}_t(v)|$ follows from the union bound over the $k$ buckets.
\end{proof}



\section{Memoization for continual collection of counter data}\label{Sec:RepeatedCollection}

One important concern regarding the use of $\epsilon$-LDP algorithms (e.g., in Section~\ref{sec:1bit:mean}) to collect counter data pertains to privacy leakage that may occur if we collect user's data repeatedly (say, daily) and user's private value $x_i$ does not change or changes little. Depending on the value of $\epsilon,$ after a number of rounds, data collector will have enough noisy reads to estimate $x_i$ with high accuracy.

Memoization~\cite{ccs:ErlingssonPK14} is a simple rule that says that: {\it At the account setup phase each user pre-computes and stores his responses to data collector for all possible values of the private counter. At data collection users do not use fresh randomness, but respond with pre-computed responses corresponding to their current counter values.} Memoization (to a certain degree) takes care of situations when the private value $x_i$ stays constant. Note that the use of memoization violates differential privacy. If memoization is employed, data collector can easily distinguish a user whose value keeps changing, from a user whose value is constant; no matter how small the $\epsilon$ is. However, privacy leakage is limited. When data collector observes that user's response had changed, this only indicates that user's value had changed, but not what it was and not what it is.

As observed in~\cite[Section 1.3]{ccs:ErlingssonPK14} using memoization technique in the context of collecting counter data is problematic for the following reason. Often, from day to day, private values $x_i$ do not stay constant, but rather experience small changes (e.g., one can think of app usage statistics reported in seconds). Note that, naively using memoization adds no additional protection to the user whose private value varies but stays approximately the same, as data collector would observe many independent responses corresponding to it.

One naive way to fix the issue above is to use discretization: pick a large integer (segment size) $s$ that divides $m;$ consider the partition of all integers into segments $[\ell s,(\ell+1)s];$ and have each user report his value after rounding the true value $x_i$ to the mid-point of the segment that $x_i$ belongs to. This approach takes care of the issue of leakage caused by small changes to $x_i$ as users values would now tend to stay within a single segment, and thus trigger the same memoized response; however accuracy loss may be extremely large. For instance, in a population where all $x_i$ are $\ell s+1$ for some $\ell,$ after rounding every user would be responding based on the value $\ell s+s/2.$

In the following subsection we present a better (randomized) rounding technique (termed $\alpha$-point rounding) that has been previously used in approximation algorithms literature~\cite{goemans2002single, bansal2008improved} and rigorously addresses the issues discussed above. We first consider the mean estimation problem.

\subsection{$\alpha$-point rounding for mean estimation}\label{SubSec:AlphaRound}

The key idea of rounding is to discretize the domain where users' counters take their values. Discretization reduces domain size, and users that behave consistently take less different values, which allows us to apply memoization to get a strong privacy guarantee.

As we demonstrated above discretization may be particularly detrimental to accuracy when users' private values are correlated. We propose addressing this issue by: {\it making the discretization rule independent across different users}. This ensures that when (say) all users have the same value, some users round it up and some round it down, facilitating a smaller accuracy loss.

We are now ready to specify the algorithm that extends the basic algorithm {\sf 1BitMean} and employs both $\alpha$-point rounding and memoization. We assume that counter values range in $[0,m].$
\begin{enumerate}
	\item At the algorithm design phase, we specify an integer $s$ (our discretization granularity). We assume that $s$ divides $m.$ We suggest setting $s$ rather large compared to $m,$ say $s=m/20$ or even $s=m$ depending on the particular application domain.
	
	\item At the the setup phase, each user $i\in [n]$ independently at random picks a value $\alpha_i\in\{0,\ldots,s-1\},$
	that is used to specify the rounding rule.
	
	\item User $i$ invokes the basic algorithm {\sf 1BitMean} with range $m$ to compute and memoize $1$-bit responses to data collector for all $\frac{m}{s}+1$ values $x_i$ in the arithmetic progression
	\begin{equation}\label{Eqn:Ai}
	A=\{\ell s\}_{0 \leq \ell\leq \frac{m}{s}}.
	\end{equation}
	
	\item Consider a  user $i$ with private value $x_i$ who receives a data collection request. Let $x_i \in [L, R)$, where $L, R$ are the two neighboring elements of the arithmetic progression $\{\ell s\}_{0 \leq \ell\leq \frac{m}{s}+1}.$ The user $x_i$ rounds value to $L$ if $x_i + \alpha_i < R$; otherwise, the user rounds the value to $R$. Let $y_i$ denote the value of the user after rounding. In each round, user responds with the memoized bit for value $y_i$. Note that rounding is always uniquely defined.	
\end{enumerate}

We now establish the properties of the algorithm above.

\begin{lemma}
	\label{Lem:RRounding}
	Define $\sigma\prime := \frac{1}{n} \sum_i {y_i}$.  Then, ${\mathbb E} [\sigma\prime] = \sigma,$ where $\sigma$ is defined by~(\ref{Eqn:Sigma}).
\end{lemma}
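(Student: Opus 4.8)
The plan is to first prove the per-user unbiasedness $\mathbb{E}[y_i] = x_i$ for every $i$, and then obtain the claim by linearity of expectation, since $\sigma' = \frac{1}{n}\sum_i y_i$ and the $\alpha_i$ are drawn independently across users. Fix a user $i$ and let $[L, R)$ be the segment of the progression $\{\ell s\}$ that contains $x_i$, so $R = L + s$; write $r_i := x_i - L \in [0, s)$ for the offset of $x_i$ inside this segment. All the randomness in $y_i$ comes from the uniform choice of $\alpha_i \in \{0, \ldots, s-1\}$.

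First I would restate the rounding rule in terms of $\alpha_i$: the user rounds down to $L$ precisely when $x_i + \alpha_i < R$, i.e. when $\alpha_i < s - r_i$, and rounds up to $R$ otherwise. The heart of the argument is counting how many of the $s$ equally likely values of $\alpha_i$ trigger rounding down. In the integer-counter case $r_i \in \{0, \ldots, s-1\}$ is an integer, so exactly the $s - r_i$ values $\alpha_i \in \{0, \ldots, s - r_i - 1\}$ satisfy the inequality; hence
\[
\Pr[y_i = L] = \frac{s - r_i}{s}, \qquad \Pr[y_i = R] = \frac{r_i}{s}.
\]

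With these two probabilities in hand, the conditional expectation is immediate, using $R = L + s$:
\[
\mathbb{E}[y_i] = L \cdot \frac{s - r_i}{s} + (L+s) \cdot \frac{r_i}{s} = L + r_i = x_i.
\]
This identity is the whole point of $\alpha$-point rounding: the up/down probabilities are made exactly proportional to the distances from $x_i$ to the two endpoints of its segment, so the rounding is unbiased in expectation even though each individual response is discretized. Summing over $i$ and dividing by $n$, linearity of expectation then yields $\mathbb{E}[\sigma'] = \frac{1}{n}\sum_i \mathbb{E}[y_i] = \frac{1}{n}\sum_i x_i = \sigma$, as defined in~\eqref{Eqn:Sigma}.

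I expect the only subtle point to be the counting step, which is where integrality of the counter enters. If $x_i$ is a genuine real number then $r_i$ need not be an integer, the number of admissible $\alpha_i$ becomes $\lceil s - r_i \rceil = s - \lfloor r_i \rfloor$ rather than $s - r_i$, and one gets $\mathbb{E}[y_i] = L + \lfloor r_i \rfloor$, introducing a bias of size $r_i - \lfloor r_i \rfloor < 1$ in counter units. I would therefore read the lemma in the integer-counter setting that is natural given the integer-valued $\alpha_i$, and remark that for real counters the identical argument gives exact unbiasedness once $\alpha_i$ is drawn continuously from $[0, s)$, while the discrete scheme incurs only a residual bias of at most one unit, negligible relative to $m$. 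The remaining steps (the reformulation of the rule and the linearity argument) are routine.
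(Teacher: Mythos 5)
Your proof is correct and takes essentially the same route as the paper's: per-user unbiasedness of the rounding followed by linearity of expectation, where the paper packages the same computation as $y_i = x_i + z_i$ with $z_i = b := R - x_i$ with probability $a/(a+b)$ and $z_i = -a := -(x_i - L)$ with probability $b/(a+b)$, exactly your probabilities $r_i/s$ and $(s-r_i)/s$. Your explicit counting of the admissible values of $\alpha_i$, and your observation that these probabilities are exact for integer offsets but require $\alpha_i$ drawn continuously from $[0,s)$ when the counters are real-valued (otherwise incurring a sub-unit bias), is a careful justification of the step the paper dismisses as ``easy to verify.''
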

\begin{proof}
	Let $a = x_i - L$ and $b = R - x_i$. Define a random variable $z_i$ as follows. Let $z_i = b$ with probability $a/(a+b)$ and $z_i = - a$ with probability $b/(a+b).$
	Then, ${\mathbb E} [z_i] = 0$. It is easy to verify that random variable  $y_i$ can be rewritten as  $y_i := x_i + z_i$. The proof the lemma follows from the linearity of expectation and the fact that  ${\mathbb E} [z_i] = 0$.
\end{proof}

Perhaps a bit surprisingly, using $\alpha$-point rounding does not lead to additional accuracy losses independent of the choice of discretization granularity $s.$	

\begin{theorem}\label{Th:RRandPM}
	Independent of the value of discretization granularity $s,$ at any round of data collection, the algorithm above provides the same accuracy guarantees as given in Theorem~\ref{thm:singlemean}.
\end{theorem}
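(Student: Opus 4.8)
The plan is to show that the rounded-and-memoized mechanism is, from the data collector's point of view in any single round, statistically indistinguishable from a fresh invocation of {\sf 1BitMean} on a (randomized) input whose population mean equals $\sigma(t)$ in expectation, so that the accuracy analysis of Theorem~\ref{thm:singlemean} transfers verbatim. First I would fix a round $t$ and recall the rounding step: each user $i$ responds with the memoized {\sf 1BitMean} bit for the value $y_i \in \{L,R\} \subseteq A$, and by the construction in step 4 together with Lemma~\ref{Lem:RRounding}, the $\alpha$-point rounding makes $y_i$ a random variable (over the choice of $\alpha_i$) with $\mathbf{E}[y_i] = x_i$. The crucial structural observation is that {\sf 1BitMean} produces a bit $b_i$ whose success probability is an \emph{affine} function of the input value, namely $\Pr[b_i = 1] = \frac{1}{e^\epsilon+1} + \frac{y_i}{m}\cdot\frac{e^\epsilon-1}{e^\epsilon+1}$.

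The key step is therefore a tower/linearity-of-expectation argument on this affine probability. Conditioning on the rounding outcome $y_i$ and then averaging over $\alpha_i$, the \emph{unconditional} probability that user $i$ reports a $1$ is
\[
\Pr[b_i = 1] = \frac{1}{e^\epsilon+1} + \frac{\mathbf{E}[y_i]}{m}\cdot\frac{e^\epsilon-1}{e^\epsilon+1} = \frac{1}{e^\epsilon+1} + \frac{x_i}{m}\cdot\frac{e^\epsilon-1}{e^\epsilon+1},
\]
where the middle equality uses exactly the affinity of the success probability together with $\mathbf{E}[y_i]=x_i$ from Lemma~\ref{Lem:RRounding}. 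This is identical to the bit distribution of the original {\sf 1BitMean} run directly on $x_i$. Moreover the bits $\{b_i(t)\}_{i\in[n]}$ remain mutually independent across users, since the rounding randomness $\alpha_i$ and the memoized coin flips are drawn independently per user. Hence the joint distribution of the collector's observed bit vector in this round coincides with that of a genuine single-round {\sf 1BitMean} collection on $\vec{x}_t$, and the same holds for the estimator $\hat\sigma(t)$ defined by~\eqref{equ:meanest}.

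Given this distributional identity, I would conclude by invoking Lemma~\ref{lem:Accuracy} (equivalently Theorem~\ref{thm:singlemean}) directly: since the $b_i(t)$ are independent $\{0,1\}$ variables with $\mathbf{E}[b_i(t)] = \frac{1}{e^\epsilon+1} + \frac{x_i(t)}{m}\cdot\frac{e^\epsilon-1}{e^\epsilon+1}$, the Chernoff--Hoeffding bound gives the identical tail estimate, and the granularity $s$ never enters because it only affects the distribution of $y_i$, not the marginal $\mathbf{E}[b_i(t)]$. I expect the main subtlety to be stating carefully what ``same accuracy guarantee'' means: the estimator $\hat\sigma(t)$ now has an extra layer of randomness (the $\alpha_i$), so I must confirm that Lemma~\ref{lem:Accuracy} applies to the \emph{marginal} bit distribution rather than requiring a deterministic input, and check that integrating out $\alpha_i$ does not inflate the variance — which it does not, precisely because the tail bound depends only on the per-user means, each of which is restored to its unrounded value by the affine-probability argument.
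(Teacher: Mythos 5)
Your proposal is correct and takes essentially the same route as the paper: the paper's proof likewise shows that each output bit $b_i$ retains the marginal distribution of formula~(\ref{equ:meanmech}) at input $x_i$, by explicitly expanding $\Pr[b_i=1]$ over the two rounding outcomes $L$ and $R$ with weights $\frac{b}{a+b}$ and $\frac{a}{a+b}$ --- precisely the computation your affinity-plus-$\mathbf{E}[y_i]=x_i$ argument packages abstractly. Your added remarks on cross-user independence and applying Lemma~\ref{lem:Accuracy} to the marginal bit distribution are sound refinements the paper leaves implicit, not a different approach.
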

\begin{proof}
	It suffices to show that independent of the $s,$ each output bit $b_i$ is still sampled according to the distribution given by formula~(\ref{equ:meanmech}). We use the notation of Lemma~\ref{Lem:RRounding}. Be formula~(\ref{equ:meanmech}) and the definition of $b_i$ we have:
	\begin{eqnarray}\label{Eqn:SisIrrelevenat}
	\mathrm{Pr}[b_i=1] & = &
	\frac{b}{a+b}\left(\frac{1}{e^\epsilon+1}+\frac{L}{m}\cdot \frac{e^\epsilon-1}{e^\epsilon+1}\right)+
	\frac{b}{a+b}\left(\frac{1}{e^\epsilon+1}+\frac{R}{m}\cdot \frac{e^\epsilon-1}{e^\epsilon+1}\right) \nonumber \\
	& = &
	\frac{1}{e^\epsilon+1} + \left(\frac{b}{a+b}\cdot \frac{L}{m}+\frac{a}{a+b}\cdot\frac{R}{m}\right)\cdot \left(\frac{e^\epsilon-1}{e^\epsilon+1}\right) \nonumber \\
	& = &
	\frac{1}{e^\epsilon+1} + \frac{1}{m}\cdot\left(\frac{b(x_i-a)+a(x_i+b)}{a+b}\right)\cdot \left(\frac{e^\epsilon-1}{e^\epsilon+1}\right) \nonumber \\
	& = &
	\frac{1}{e^\epsilon+1} + \frac{x_i}{m}\cdot \frac{e^\epsilon-1}{e^\epsilon+1}, \nonumber
	\end{eqnarray}
	which concludes the proof.
\end{proof}

\subsection{Privacy definition using permanent memoization}\label{SubSec:PrivacyGuarantees}
In what follows we detail privacy guarantees provided by an algorithm that employs $\alpha$-point rounding and memoization in conjunction with the $\epsilon$-DP $1$-bit mechanism of Section~\ref{sec:1bit:mean} against a data collector that receives a very long stream of user's responses to data collection events.

Let $U$ be a user and $x(1),\ldots, x(T)$ be the sequence of $U$'s private counter values. Given user's private value $\alpha_i,$ each of $\{x(j)\}_{j\in [T]}$ gets rounded to the corresponding value $\{y(j)\}_{j\in [T]}$ in the set $A$ (defined by~(\ref{Eqn:Ai})) according to the rule given in Section~\ref{SubSec:AlphaRound}.

\begin{definition}\label{Def:BehPatt}
	Let $B$ be the space of all sequences $\{z(j)\}_{j\in [T]} \in A^T,$ considered up to an arbitrary permutation of the elements of $A.$ We define the behavior pattern $b(U)$ of the user $U$ to be the element of $B$ corresponding to $\{y(j)\}_{j\in [T]}.$ We refer to the number of distinct elements $y(j)$ in the sequence $\{y(j)\}_{j\in [T]}$ as the width of $b(U).$
\end{definition}

We now discuss our notion of behavior pattern, using counters that carry daily app usage statistics as an example. Intuitively, users map to the same behavior pattern if they have the same number of different modes (approximate counter values) of using the app, and switch between these modes on the same days. For instance, one user that uses an app for 30 minutes on weekdays, 2 hours on weekends, and 6 hours on holidays, and the other user who uses the app for 4 hours on weekdays, 10 minutes on weekends, and does not use it on holidays will likely map to the same behavior pattern. Observe however that the mapping from actual private counter values $\{x(j)\}$ to behavior patterns is randomized, thus there is a likelihood that some users with identical private usage profiles may map to different behavior patterns. This is a positive feature of the Definition~\ref{Def:BehPatt} that increases entropy among users with the same behavior pattern.

The next theorem shows that the algorithm of Section~\ref{SubSec:AlphaRound} makes users with the same behavior pattern blend with each other from the viewpoint of data collector (in the sense of differential privacy).

\begin{theorem}\label{Th:PermMem}
	Consider users $U$ and $V$ with sequences of private counter values $\{x_U(1),\ldots, x_U(T)\}$ and $\{x_V(1),\ldots, x_V(T)\}.$  Assume that both $U$ and $V$ respond to $T$ data collection events using the algorithm presented in Section~\ref{SubSec:AlphaRound}, and $b(U)=b(V)$ with the width of $b(U)$ equal to $w$. Let ${\bf s}_U,{\bf s}_V\in \{0,1\}^T$ be the random sequences of responses generated by users $U$ and $V;$ then for any binary string ${\bf s}\in \{0,1\}^T$ in the response domain, we have:	
	\begin{equation}\label{Eqn:PermMem}
	\pr{{\bf s}_U={\bf s}}\leq e^{w\epsilon}\cdot \pr{{\bf s}_V={\bf s}}.
	\end{equation}
\end{theorem}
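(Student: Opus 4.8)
The plan is to exploit the fact that, because of memoization, a user's entire $T$-long response stream is a deterministic function of only a handful of independent memoized bits---one per distinct rounded value---together with the behavior pattern recording which value is reported at each time step. First I would observe that conditioning on $b(U)$ (respectively $b(V)$) fixes the rounded sequence $\{y_U(j)\}_{j\in[T]}$ (respectively $\{y_V(j)\}_{j\in[T]}$) up to a relabeling of the elements of $A$, so that the only remaining randomness in $\mathbf{s}_U$ and $\mathbf{s}_V$ comes from the bits produced by {\sf 1BitMean}. Since the width of $b(U)$ equals $w$, there are exactly $w$ distinct values, say $y^U_1,\ldots,y^U_w$ for $U$ and $y^V_1,\ldots,y^V_w$ for $V$; the hypothesis $b(U)=b(V)$ supplies a bijection pairing $y^U_k$ with $y^V_k$ and guarantees that both users report their $k$-th distinct value on exactly the same set of time steps.

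Next I would invoke the defining property of memoization: for each distinct value the response bit is drawn once by {\sf 1BitMean} and then reused on every occurrence, and these $w$ draws are mutually independent across distinct values. Writing $c^U_k$ for the memoized bit attached to $y^U_k$ (and $c^V_k$ similarly), the event $\mathbf{s}_U=\mathbf{s}$ forces $c^U_k$ to equal whatever value $\mathbf{s}$ takes on the block of time steps associated with the $k$-th distinct value. If $\mathbf{s}$ is not constant on some such block then $\pr{\mathbf{s}_U=\mathbf{s}}=\pr{\mathbf{s}_V=\mathbf{s}}=0$ and the claim is trivial; otherwise, letting $s_k$ denote the common value of $\mathbf{s}$ on the $k$-th block, independence gives
\[
\pr{\mathbf{s}_U=\mathbf{s}}=\prod_{k=1}^{w}\pr{c^U_k=s_k}, \qquad \pr{\mathbf{s}_V=\mathbf{s}}=\prod_{k=1}^{w}\pr{c^V_k=s_k},
\]
where the \emph{same} targets $s_k$ appear on both sides precisely because $b(U)=b(V)$ makes the block structure identical.

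Finally I would apply the single-bit privacy guarantee. Each $c^U_k$ is exactly a {\sf 1BitMean} response for input $y^U_k\in[0,m]$, and $c^V_k$ for input $y^V_k\in[0,m]$. By Lemma~\ref{lem:Privacy}, the probability that {\sf 1BitMean} outputs any fixed bit differs by at most a factor $e^\epsilon$ between any two inputs, hence $\pr{c^U_k=s_k}\le e^\epsilon\cdot\pr{c^V_k=s_k}$ for every $k$. Multiplying these $w$ inequalities yields $\pr{\mathbf{s}_U=\mathbf{s}}\le e^{w\epsilon}\cdot\pr{\mathbf{s}_V=\mathbf{s}}$, which is exactly~(\ref{Eqn:PermMem}).

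I expect the main obstacle to be conceptual bookkeeping rather than calculation: the real content is recognizing that memoization collapses the $T$-fold repeated collection into just $w$ independent bit draws, and that the permutation freedom in the definition of the behavior pattern is exactly what lets the per-value privacy losses be paired up and summed to $w\epsilon$ instead of the naive $T\epsilon$. The step requiring the most care is justifying the factorization---namely that conditioning on $b(U)=b(V)$ legitimately fixes the rounded sequences and that the bijection between distinct values makes the targets $s_k$ coincide for $U$ and $V$---since everything downstream reduces to a clean product of per-bit ratio bounds.
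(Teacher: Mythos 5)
Your proof is correct and takes essentially the same route as the paper's: both arguments reduce the $T$-long response streams to the $w$ independent memoized {\sf 1BitMean} bits attached to the distinct rounded values, and then multiply the per-bit $e^\epsilon$ bounds (Lemma~\ref{lem:Privacy}) to get $e^{w\epsilon}$. Your write-up simply makes explicit details the paper leaves implicit, namely the factorization over blocks of time steps and the degenerate case where $\mathbf{s}$ is not constant on a block, so both probabilities vanish.
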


	\begin{proof}
		Let $\{y_U(1),\ldots, y_U(T)\}$ and $\{y_V(1),\ldots, y_V(T)\}$ be the sequences of $U's$ and $V's$ counter values after applying $\alpha$-point rounding. Since the width of $b(U)$ is $w,$ the set $\{y_U(j)\}_{j\in [T]}$ contains $w$ elements $\{y_U(j_1),\ldots,y_U(j_w)\}.$ Similarly, the set $\{y_V(j)\}_{j\in [T]}$ contains $w$ elements $\{y_V(j_1),\ldots,y_V(j_w)\}.$ Note that vectors ${\bf s}_U$ and ${\bf s}_V$ are each determined by $w$ bits that are $U$'s ($V$'s) memoized responses corresponding to counter values $\{y_U(j_s)\}_{s\in [w]}$ and $\{y_V(j_s)\}_{s\in [w]}$. By the $\epsilon$-LDP property of the basic algorithm {\sf 1BitMean} of Section~\ref{sec:1bit:mean} for all values of $y,y^\prime \in [0,\ldots,m]$ and all $b\in \{0,1\},$ we have
		$$
		\mathrm{Pr}[{\sf 1BitMean}(y)=b]\leq e^{\epsilon}\cdot \mathrm{Pr}[{\sf 1BitMean}(y^\prime)=b].
		$$
		Thus the probability of observing some specific $w$ responses of ${\mathcal A}$ can increase by at most $e^{w\epsilon}$ as we vary the inputs.
	\end{proof}

\subsubsection{Setting parameters}\label{SubSubSec:SettingS}

The $\epsilon$-LDP guarantee provided by Theorem~\ref{Th:PermMem} ensures that each user is indistinguishable from other users with the same behavior pattern (in the sense of LDP). The exact shape of behavior patterns is governed by the choice of the parameter $s.$ Setting $s$ very large, say $s=m$ or $s=m/2$ reduces the number of possible behavior patterns and thus increases the number of users that blend by mapping to a particular behavior pattern. It also yields stronger guarantee for blending within a pattern since for all users $U$ we necessarily have $b(U)\leq m/s+1$ and thus by Theorem~\ref{Th:PermMem} the likelihood of distinguishing users within a pattern is trivially at most $e^{(m/s+1)\cdot\epsilon}.$ At the same time there are cases where one can justify using smaller values of $s.$ In fact, consistent users, i.e., users whose private counter always land in the vicinity of one of a small number of fixed values enjoy a strong LDP guarantee within their patterns irrespective of $s$ (provided it is not too small), and smaller $s$ may be advantageous to avoid certain attacks based on auxiliary information as the set of all possible values of a private counter $x_i$ that lead to a specific output bit $b$ is potentially more complex.

Finally, it is important to stress that the $\epsilon$-LDP guarantee established in Theorem~\ref{Th:PermMem} is not a panacea, and in particular it is a weaker guarantee (provided in a much more challenging setting) than just the $\epsilon$-LDP guarantee across all users that we provide for a single round of data collection. In particular, while LDP across all population of users is resilient to any attack based on auxiliary information, LDP across a sub population may be vulnerable to such attacks and additional levels of protection may need to be applied. In particular, if data collector observes that user's response has changed; data collector knows with certainty that user's true counter value had changed. In the case of app usage telemetry this implies that app has been used on one of the days. This attack is partially mitigated by the output perturbation technique that is discussed in Section~\ref{sec:opperturbation}.

\subsubsection{Experimental study}\label{SubSubSec:PatternDistribution}

We use a real-world dataset of 3 million users with their daily usage of two apps (App A and B) collected (in seconds) over a continuous period of 31 days to demonstrate the mapping of users to behavior patterns in~\cfig\ref{fig:patterns}. For each behavior pattern (\cdef\ref{Def:BehPatt}), we calculate its {\em support} as {\em the number of users with their sequences in this pattern} ($y$-axis). All the patterns' supports ${sup}$ are plotted in the decreasing order, and we can also calculate the percentage of users ($x$-axis) in patterns with supports at least ${sup}.$ We vary the parameter $s$ in permanent memoization from $m$ (maximizing blending) to $m/3$ and report the corresponding distributions of pattern supports in~\cfig\ref{fig:patterns}.

\begin{figure}[t]
	\center
	\subfigure{
		\includegraphics[width=0.31\textwidth]{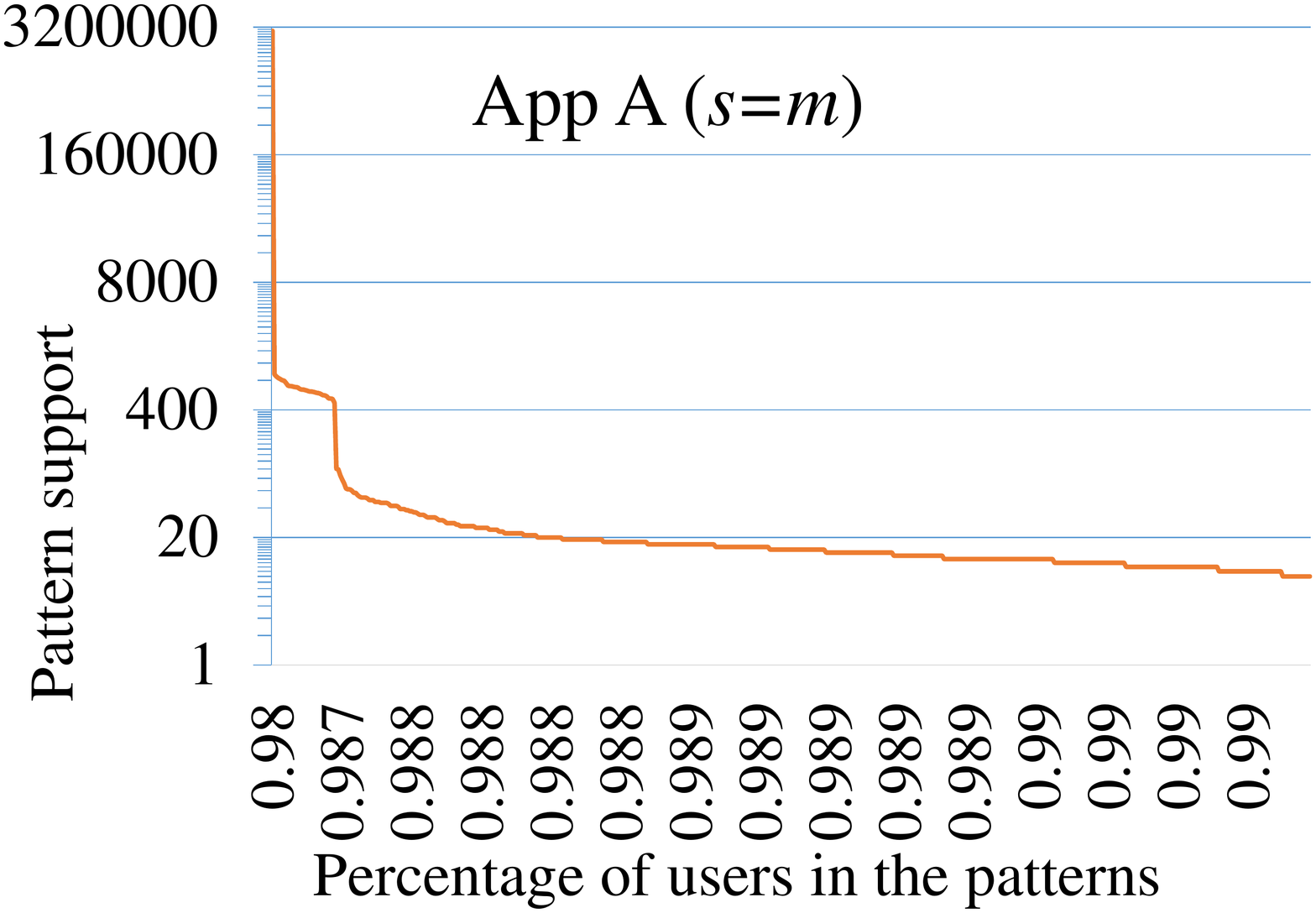}
	}
	\subfigure{
		\includegraphics[width=0.31\textwidth]{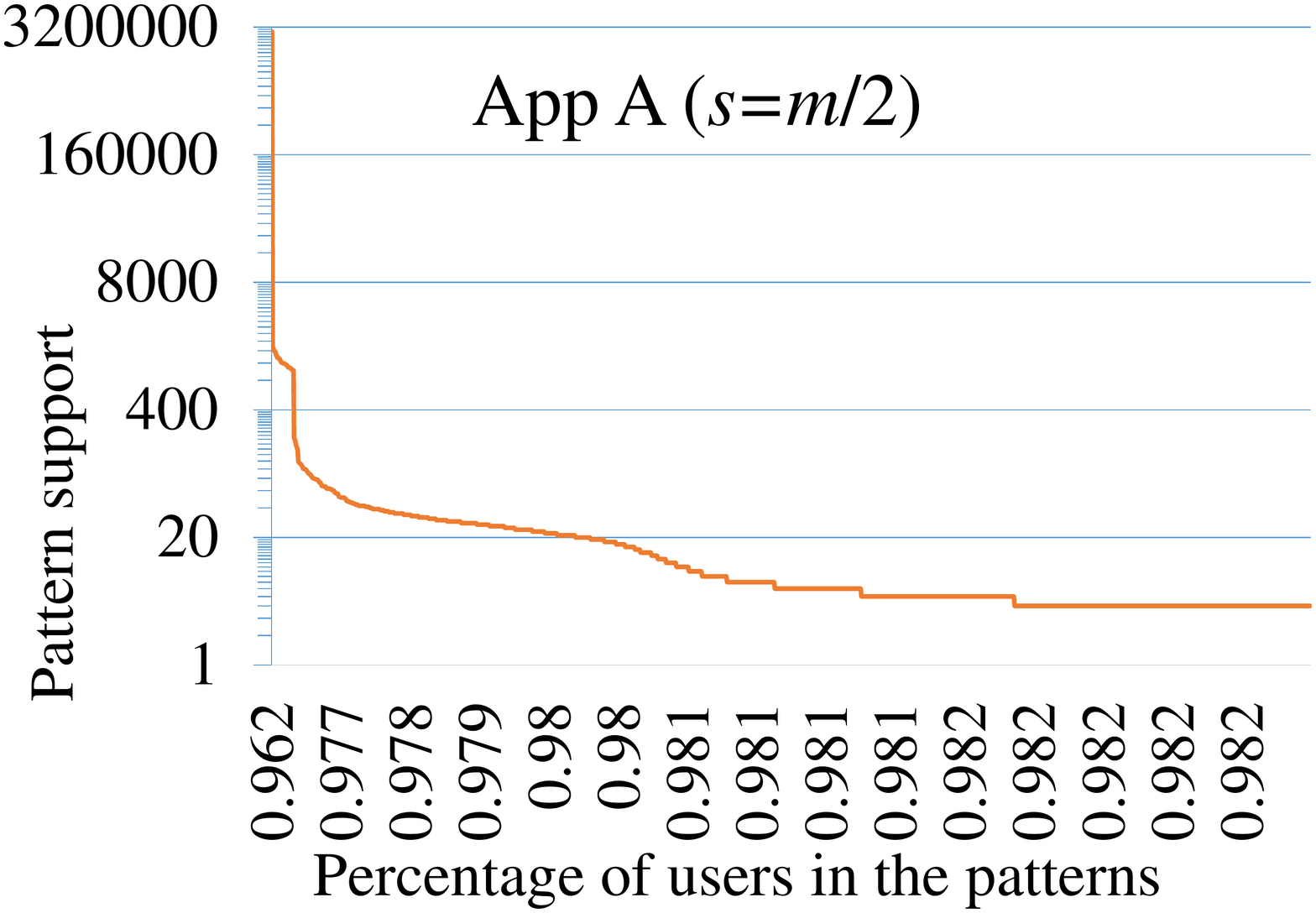}
	}
	\subfigure{
		\includegraphics[width=0.31\textwidth]{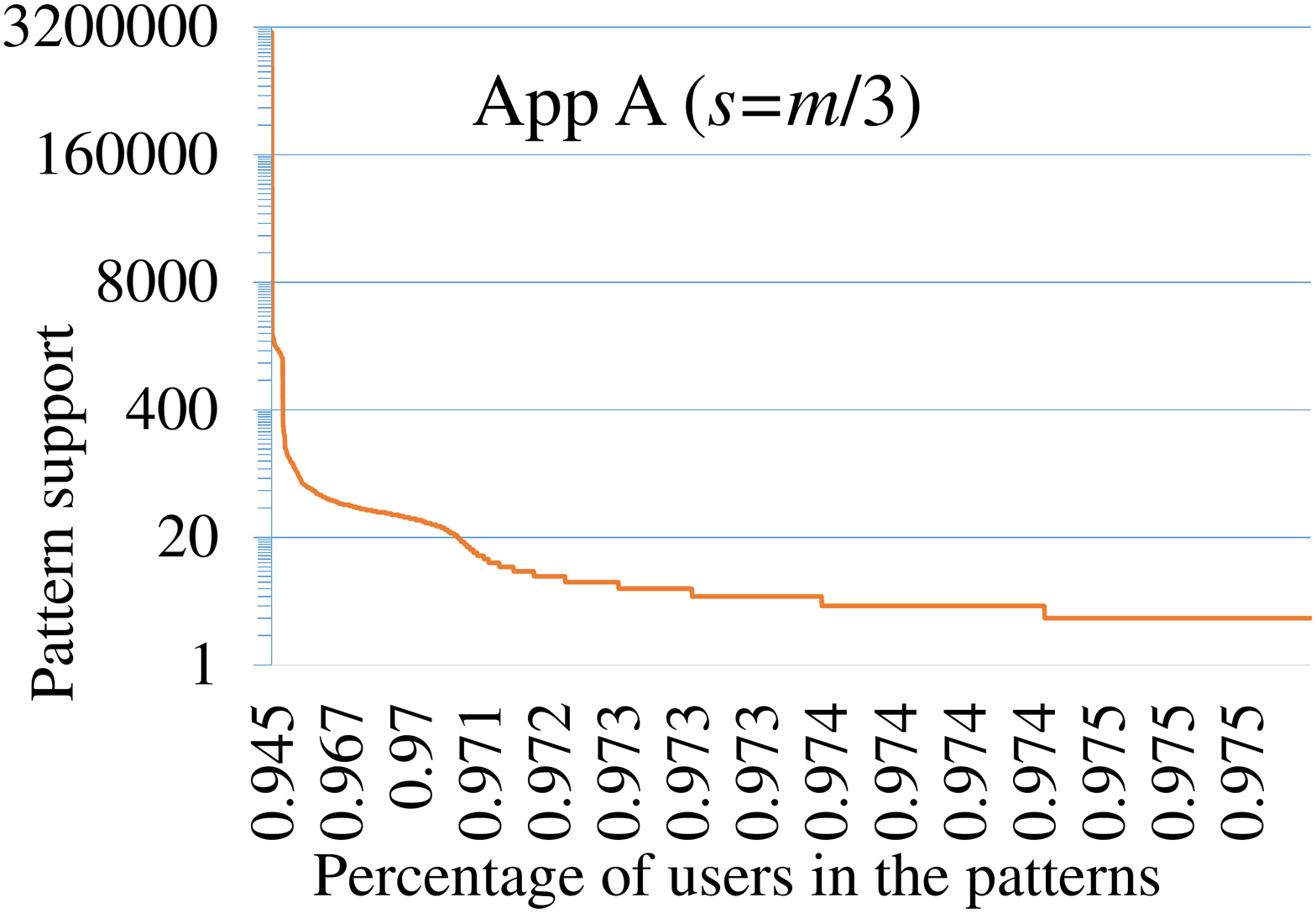}
	}
	
	\subfigure{
	\includegraphics[width=0.31\textwidth]{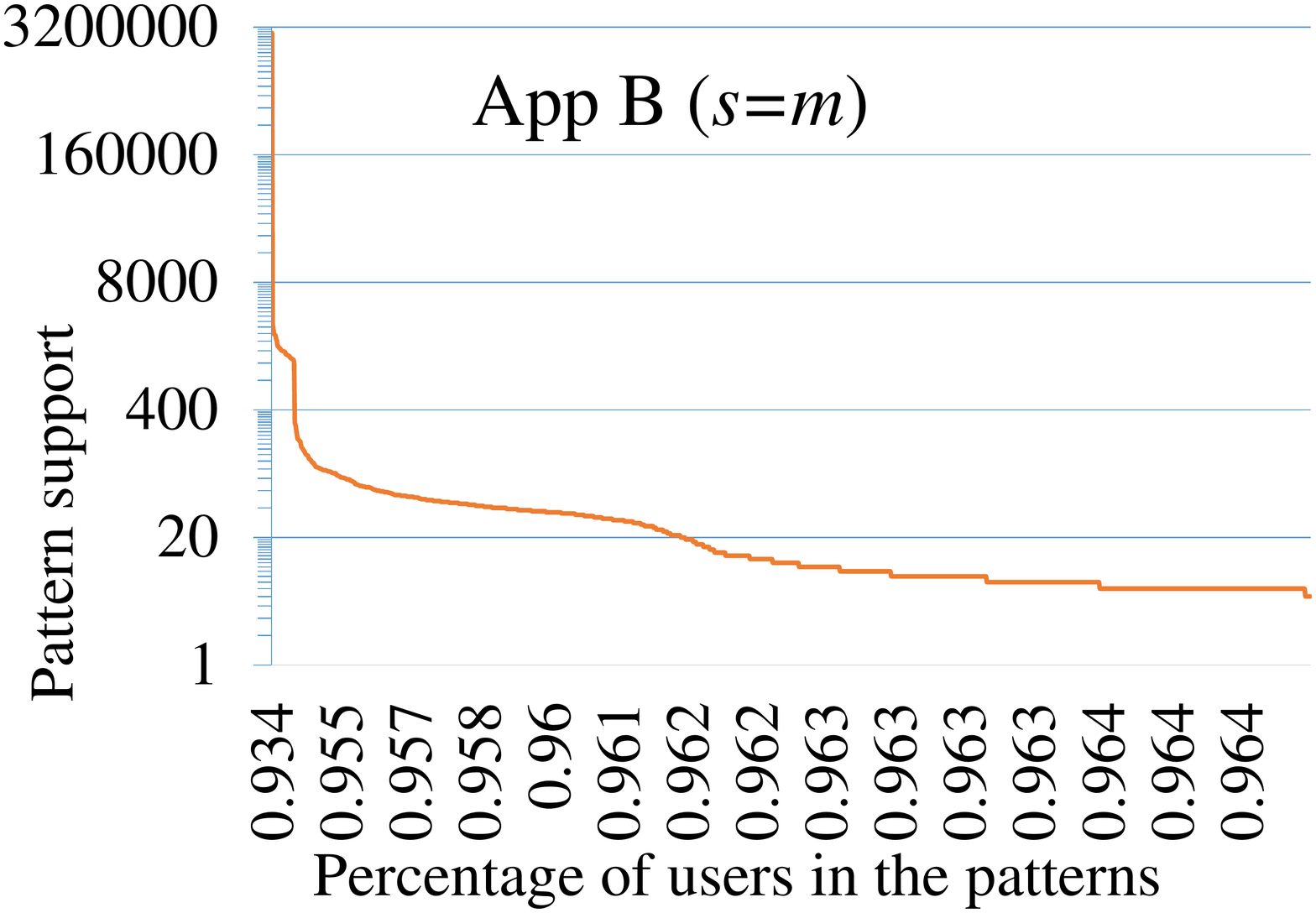}
	}
	\subfigure{
	\includegraphics[width=0.31\textwidth]{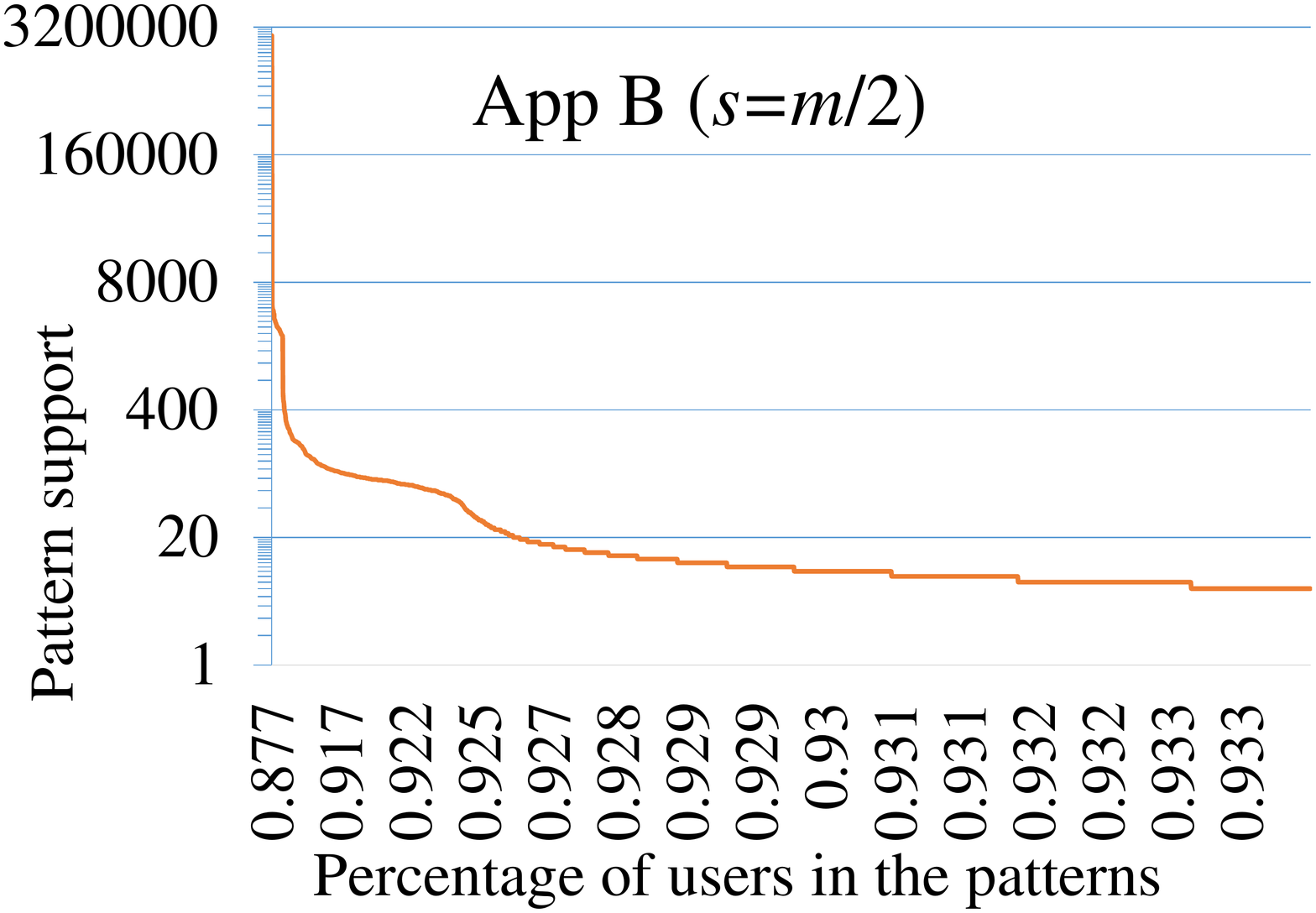}
	}
	\subfigure{
	\includegraphics[width=0.31\textwidth]{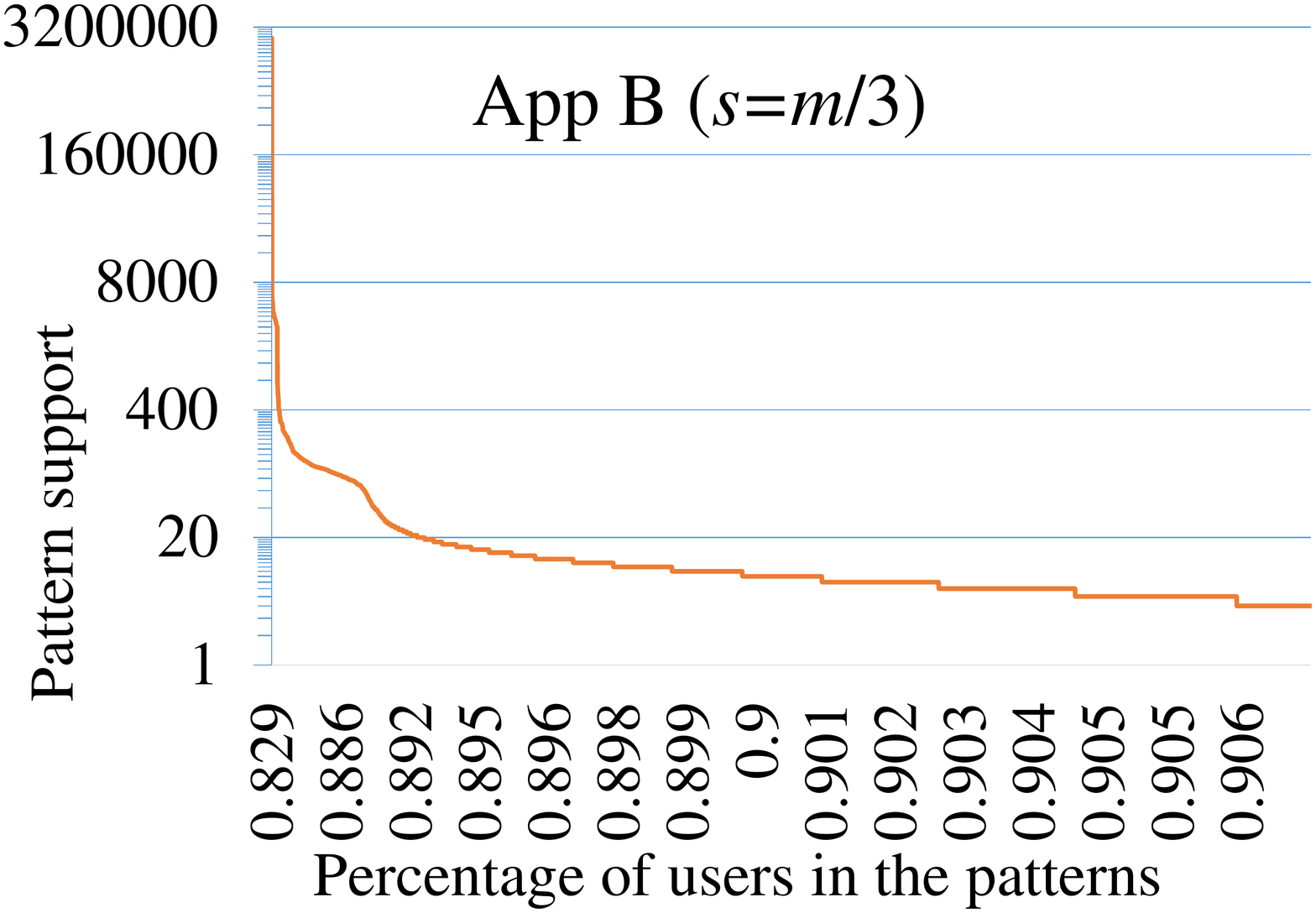}
	}
	\vspace{-0.5cm}
	\caption{Distribution of pattern supports for App A and B}
	\label{fig:patterns}
	\vspace{-0.5cm}
\end{figure}

It is not hard to see that theoretically for every behavior pattern there is a very large set of sequences of private counter values $\{x(t)\}_t$ that may map to it (depending on $\alpha_i$). Real data~(\cfig\ref{fig:patterns}) provides evidence that users tend to be approximately consistent and therefore simpler patterns, i.e., patterns that mostly stick to a single rounded value $y(t)=y$ correspond to larger sets of sequences $\{x_i(t)\}_t,$ obtained from a real population. In particular, for each app there is always one pattern (corresponding to having one fixed $y(t)=y$ across all 31 days) which blends the majority of users ($>2$ million). However more complex behavior patterns have less users mapping to them. In particular, there always are some lonely users ($1\%$-$5\%$ depending on $s$) who land in patterns that have support size of one or two. From the viewpoint of data collector such users can only be identified as those having a complex and irregular behavior, however the actual nature of that behavior by Theorem~\ref{Th:PermMem} remains uncertain.

\subsection{Example}\label{SubSec:AgeInDays}

One specific example of a counter collection problem that has been identified in~\cite[Section 1.3]{ccs:ErlingssonPK14} as being non-suitable for techniques presented in~\cite{ccs:ErlingssonPK14} but can be easily solved using our methods is to repeatedly collect age in days from a population of users. When we set $s=m$ and apply the algorithm of Section~\ref{SubSec:AlphaRound} we can collect such data for $T$ rounds with high accuracy. Each user necessarily responds with a sequence of bits that has form $z^t\circ {\bar z}^{T-t},$ where $0\leq t\leq T.$ Thus data collector only gets to learn the transition point, i.e., the day when user's age in days passes the value $m-\alpha_i,$ which is safe from privacy perspective as $\alpha_i$ is picked uniformly at random by the user.

\subsection{Continual collection for histogram estimation using permanent memoization}\label{SubSec:repeatedhistogram}

Since we discretize the range of values and map each user's value to a small number of $k$ buckets, $\alpha$-point rounding is not needed for histogram estimation. The single-round LDP mechanism in Duchi {\em et al.} \cite{nips:DuchiWJ13} sends out a 0-1 random response for each bucket: send $1$ with probability $e^{\eps/2}/(e^{\eps/2}+1)$ if the counter value is in this bucket, with probability $1/(e^{\eps/2}+1)$ if not. It is easy to see that this mechanism is $\epsilon$-LDP. Each user can memorize a mapping $f_k: [k] \rightarrow \{0,1\}^k$ by running this mechanism once for each $v \in [k]$, and always respond $f_k(v)$ if the users' value is in bucket $v$. However, this memoization schema leads to very serious privacy leakage. There is a situation where one has auxiliary information that can deterministically correlate a user's value with the output $z \in \{0,1\}^k$ produced by the algorithm: more concretely, if the data collector knows that the app usage value is in a bucket $v$ and observes the output $z$ in some day, whenever the user sends $z$ again in future, the data collector can infer that the bucket number is $v$ with almost 100\% probability.

To avoid such privacy leakages, we apply permanent memoization on our $d$-bit mechanism {\sf $d$BitFlip} (\csec\ref{sec:1bit:histogram}). Each user runs {\sf $d$BitFlip} once for each bucket number $v \in [k]$ and memoizes the response in a mapping $f_d: [k] \rightarrow \{0,1\}^d.$ The user will always send $f_d(v)$ if the bucket number is $v.$ This is mechanism is denoted by  {\sf $d$BitFlipPM}, and the same estimator \eqref{equ:histogramest} can be used to estimate the histogram upon receiving the $d$-bit response from every user. This scheme avoids several privacy leakages that arise due to memoization, because multiple ($\bigomega{k/2^d}$ w.h.p.) buckets are mapped to the same response. This protection is the strongest when $d = 1$. \cdef\ref{Def:BehPatt} about behavior patterns and \cthm\ref{Th:PermMem} can be naturally generalized here to provide similar privacy guarantee in repeated data collection.


\section{Output Perturbation}
\label{sec:opperturbation}

One of the limitations of memoization approach is that it does not protect the points of time where user's behavior changes significantly. Consider a user  who never uses an app for a long time, and then starts using it. When this happens, suppose the output produced by our algorithm changes from $0$ to $1.$ Then the data collector can learn with certainty that the user's behavior changed, (but not what this behavior was or what it became).  Output perturbation is one possible mechanism of protecting the {\em exact location} of the points of time where user's behavior has changed.  As mentioned earlier, output perturbation was introduced in \cite{ccs:ErlingssonPK14} as a way to mitigate privacy leakage that arises due to memoization. The main idea behind output perturbation is to flip the output of memoized responses with a small probability $ 0 \leq \gamma \leq 0.5$. This ensures that data collector will not be able to learn with certainty that behavior of a user changed at certain time stamps.

Consider the mean estimation algorithm. Suppose $b_i(t)$ denotes the memoized response bit for user $i$ at time $t.$ Then,
\begin{equation}\label{Eqn:OpAlg}
\hat b_i(t) =
\left\{
\begin{array}{ll}
b_i(t),& \mathrm{with \ probability\ } 1- \gamma; \\
1 - b_i(t) ,& \mathrm{otherwise}.                                                           \\
\end{array}
\right.
\end{equation}

Note that output perturbation is done at each time stamp $t$ on the memoized responses. To see how output perturbation protects users from the data collector learning exact points at which user's behavior changed, we need to set up some notation.
For an arbitrary $T > 0,$ fix a time horizon $[1, 2, \ldots T]$ where the counter data is collected. Let $x$ and $x'$ be two vectors in $[m]^T$, let $x(t)$ denote the $t$th coordinate of $x$ for $t \in [0,T]$.  Let $\mathcal{A}(x)$ and $\mathcal{A}(x')$ denote the output produced by our  1-bit algorithm + memoization. Let $\mathcal{A'}(x)$ and $\mathcal{A'}(x')$ denote the output produced by our  1-bit algorithm + memoization + output perturbation. Suppose the Hamming distance between $\mathcal{A}(x)$ and $\mathcal{A}(x')$ is at most $\delta$. Then,

\begin{theorem}\label{Th:OPLeakage}
Let $S$ be a vector in $\{0,1\}^T$. Then, $\frac{P[\mathcal{A'}(x) = S]}{P[\mathcal{A'}(x') = S]} \geq \gamma^{\delta}$.
\end {theorem}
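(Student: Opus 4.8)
The plan is to condition on the user's memoization tables so that the pre-perturbation outputs $a := \mathcal{A}(x)$ and $a' := \mathcal{A}(x')$ become \emph{fixed} binary strings in $\{0,1\}^T$ with $d_H(a,a') \le \delta$, where $d_H$ denotes Hamming distance; this is exactly the regime in which the hypothesis ``the Hamming distance between $\mathcal{A}(x)$ and $\mathcal{A}(x')$ is at most $\delta$'' is meaningful, since memoized responses are deterministic given the table and the only randomness entering $\mathcal{A'}$ is the per-coordinate perturbation noise of~(\ref{Eqn:OpAlg}). Because each of the $T$ memoized bits is flipped independently with probability $\gamma$, the event $\{\mathcal{A'}(x) = S\}$ forces a flip in exactly the coordinates where $a$ and $S$ disagree and no flip elsewhere, so by independence
\[
P[\mathcal{A'}(x) = S] = \gamma^{\,d_H(a,S)}\,(1-\gamma)^{\,T - d_H(a,S)},
\]
and the analogous identity holds for $x'$ with $a'$ in place of $a$.

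First I would form the ratio of these two closed forms; the $\gamma$ and $(1-\gamma)$ factors recombine into a single power,
\[
\frac{P[\mathcal{A'}(x) = S]}{P[\mathcal{A'}(x') = S]} = \left(\frac{\gamma}{1-\gamma}\right)^{\,d_H(a,S) - d_H(a',S)}.
\]
Next I would bound the exponent by the triangle inequality for Hamming distance, $d_H(a,S) - d_H(a',S) \le d_H(a,a') \le \delta$.

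Finally, since $\gamma \le 0.5$ we have $\gamma/(1-\gamma) \le 1$, so raising this base to a larger exponent only decreases its value; together with the exponent bound and $\delta \ge 0$ this yields
\[
\left(\frac{\gamma}{1-\gamma}\right)^{\,d_H(a,S)-d_H(a',S)} \ge \left(\frac{\gamma}{1-\gamma}\right)^{\delta} \ge \gamma^{\delta},
\]
the last step using $\gamma/(1-\gamma) \ge \gamma$ (valid as $\gamma \ge 0$). This proves the claim.

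I expect no serious obstacle: once the right viewpoint is fixed the computation is elementary. The single point that genuinely needs care is the modeling step above, namely recognizing that $a$ and $a'$ are deterministic given the memoized table so that the stated Hamming distance is well defined and the probabilities factor across coordinates. As a secondary remark, the argument in fact establishes the sharper lower bound $(\gamma/(1-\gamma))^{\delta}$, and the weaker $\gamma^{\delta}$ in the statement follows merely by discarding the factor $(1-\gamma)^{-\delta} \ge 1$.
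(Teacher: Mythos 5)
Your proof is correct, and it runs on the same engine as the paper's---independence of the per-coordinate flips in~(\ref{Eqn:OpAlg})---but the packaging is genuinely different. The paper factors the ratio $P[\mathcal{A'}(x)=S]/P[\mathcal{A'}(x')=S]$ coordinate by coordinate (its Eq.~(\ref{e:opratio})), observes that each factor equals $1$ wherever $\mathcal{A}(x(t))=\mathcal{A}(x'(t))$ and is at least $\gamma$ (being either $\gamma/(1-\gamma)$ or its reciprocal) wherever they differ, and multiplies over the at most $\delta$ disagreeing coordinates. You instead write the exact closed form $\gamma^{d_H(a,S)}(1-\gamma)^{T-d_H(a,S)}$ for each probability, so the ratio collapses to $\left(\gamma/(1-\gamma)\right)^{d_H(a,S)-d_H(a',S)}$, and you control the exponent by the triangle inequality for Hamming distance. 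This buys two things the paper leaves implicit: first, the sharper bound $\left(\gamma/(1-\gamma)\right)^{\delta}$ (the paper's per-coordinate estimate $\geq\gamma$ discards a factor of $(1-\gamma)^{-1}$ per disagreement, though its argument would support the same constant); second, an explicit treatment of the modeling point that $\mathcal{A}(x)$ and $\mathcal{A}(x')$ are deterministic strings once the memoization table is fixed, which is precisely what makes the hypothesis $d_H(\mathcal{A}(x),\mathcal{A}(x'))\leq\delta$ well defined---and since your conditional bound holds for every table realization satisfying that hypothesis, it averages to the unconditional claim. The only degenerate case, $\gamma=0$ (where the ratio can be $0/0$), afflicts both arguments equally and lies outside the intended regime $0<\gamma\leq 0.5$.
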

Recall that in the output perturbation step, we flip each output bit $\mathcal{A}(x(t))$ independently with probability $\gamma$. This implies,
$$\frac{P[\mathcal{A'}(x) = S]}{P[\mathcal{A'}(x') = S]} = \Pi^{T}_{t = 1} \frac{P[\mathcal{A'}(x(t)) = S(t)]}{P[\mathcal{A'}(x'(t)) = S(t)]},$$

where we $S(t) \in \{0,1\}$ denotes the value of $S$ at the $t$th coordinate. For a $t \in [T]$ for which  $\mathcal{A}(x(t)) =  \mathcal{A}(x'(t))$, we have $\frac{P[\mathcal{A'}(x(t)) = S(t)]}{P[\mathcal{A'}(x'(t)) = S(t)]} =1$; this is true, since the probability used to flip the output bits is same for both the strings. Therefore,

\begin{equation}
\label{e:opratio}
\displaystyle \frac{P[\mathcal{A'}(x) = S]}{P[\mathcal{A'}(x') = S]} = \Pi_{t: t \in[T], \mathcal{A}(x(t))\not = \mathcal{A}(x'(t))} \frac{P[\mathcal{A'}(x(t)) = S(t)]}{P[\mathcal{A'}(x'(t)) = S(t)]},
\end{equation}

Now notice that for a $t \in [T]$ for which  $\mathcal{A}(x(t)) \not =  \mathcal{A}(x'(t))$, we have $\frac{P[\mathcal{A'}(x(t)) = S(t)]}{P[\mathcal{A'}(x'(t)) = S(t)]} \geq \gamma$. Thus, the lemma follows from Eq. (\ref{e:opratio}) and from our assumption that $|\{t: t \in[T], \mathcal{A}(x(t))\not = \mathcal{A}(x'(t)) \}| \leq \delta$.

The theorem implies that  if the user behavior changed at time $t$, then there is an interval of time $[t-\delta, t+\delta]$ where the data collector would not be able to differentiate if the user behavior changed at time $t$ or any other time $t' \in [t-\delta, t+\delta].$ Consider a user $i$ and let $x_i$ be a vector in $[m]^T$ that denotes the values taken by $i$ in the interval $[1, 2, ..., T]$. Suppose the user's behavior remains constant up to time step $t$, and it changes at time $t$, and then remains constant.  Without loss of generality, let us assume that $x_i(t') = a$ for all $t' < t$, and $x_i(t') = b$ for all $t' \geq t$.
Consider the case when the output produced by our memoization changes at time $t$; that is, using the notation from above paragraph, $\mathcal{A}(x_i(a)) \not = \mathcal{A}(x_i(b))$.  Without output perturbation, the data collector will be certain that user's value changed at time $t$. With output perturbation, we claim that the data collector would not be able to differentiate if the user's behavior changed at time $t$ or any other time $t' \in [t-\delta, t+\delta]$, if $\delta$ is sufficiently small. (Think of $\delta$ as some small constant.) We argue as follows. Consider another pattern of user's behavior $x'_i \in [m]^T$, $x'_i(t') = a$ for all $t' < t^*$ and $x'_i(t') = b$ for all $t' \geq t^*$. Further, if $t^* \in [t-\delta, t+\delta]$, then 	$\frac{P[\mathcal{A'}(x_i) = S]}{P[\mathcal{A'}(x'_i) = S]} \geq \gamma^{\delta}$. This is true because of the following reason. Consider the case $t^* \geq t$. Then, in the interval $[t, t + \delta]$, the output of 1-bit mechanism + memoization can be different for the strings $x_i, x'_i$. However, Hamming distance of $\mathcal{A}(x_i)$ and $\mathcal{A}(x'_i)$ is at most $\delta$. Thus, we conclude from Theorem \ref{Th:OPLeakage} that $\frac{P[\mathcal{A'}(x_i) = S]}{P[\mathcal{A'}(x'_i) = S]} \geq \gamma^{\delta}$. The argument for the case $t^* < t$ is exactly the same. Thus, output perturbation can help to protect learning exact points of time where the users' behavior changes.

Consider a single round of data collection with the algorithm above.
\begin{theorem}\label{Th:OP}
Using output perturbation with a positive $\gamma,$ in combination with the  $\epsilon$-DP {\sf 1BitMean} algorithm is equivalent to invoking the {\sf 1BitMean} algorithm with
\begin{equation}\label{Eqn:EPrime}
\epsilon^\prime=\ln \left (\frac{(1-2\gamma)(\frac{e^\epsilon}{e^\epsilon+1}) + \gamma}{(1-2\gamma)(\frac{1}{e^\epsilon+1}) + \gamma} \right).
\end{equation}
Thus, for each round of data collection, with probability at least $(1-\delta)$ the error of the mechanism presented above is at most $\left (m \cdot \frac{e^\epsilon+1}{(1-2\gamma) (e^\epsilon - 1)} \cdot \sqrt{\frac{1}{2n} \cdot \log \frac{2}{\delta}} \right),$ where $\delta$ is an arbitrary constant between zero and one.
\end{theorem}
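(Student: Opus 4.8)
The plan is to reduce the entire claim to Theorem~\ref{thm:singlemean} by showing that applying output perturbation~\eqref{Eqn:OpAlg} on top of {\sf 1BitMean}$(\epsilon)$ produces a response bit whose distribution is \emph{exactly} that of {\sf 1BitMean} run with the parameter $\epsilon'$ of~\eqref{Eqn:EPrime}. Once this distributional equivalence is in hand, the $\epsilon'$-LDP guarantee follows from Lemma~\ref{lem:Privacy} applied to $\epsilon'$, and the accuracy bound follows by substituting $\epsilon'$ into the single-round error estimate (with the data collector using the corresponding {\sf 1BitMean}$(\epsilon')$ estimator so that unbiasedness is preserved).

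First I would compute the output distribution of the perturbed bit. Writing $p := \Pr[b_i(t)=1] = \frac{1}{e^\epsilon+1} + \frac{x_i(t)}{m}\cdot\frac{e^\epsilon-1}{e^\epsilon+1}$, the flipping rule~\eqref{Eqn:OpAlg} gives
\[
\Pr[\hat b_i(t)=1] = (1-\gamma)\,p + \gamma(1-p) = (1-2\gamma)\,p + \gamma.
\]
Expanding $p$ and collecting terms, this equals $A + \frac{x_i(t)}{m}\cdot B$, where $A = (1-2\gamma)\frac{1}{e^\epsilon+1}+\gamma$ is the intercept and $B = (1-2\gamma)\frac{e^\epsilon-1}{e^\epsilon+1}$ is the slope. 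The crux of the argument is to recognize that this is again a {\sf 1BitMean} distribution. A law of the form $\frac{1}{e^{\epsilon'}+1} + \frac{x}{m}\cdot\frac{e^{\epsilon'}-1}{e^{\epsilon'}+1}$ is characterized by the identity \emph{slope} $= 1 - 2\cdot(\emph{intercept})$, since $\frac{e^{\epsilon'}-1}{e^{\epsilon'}+1} = 1 - \frac{2}{e^{\epsilon'}+1}$. I would verify directly that $B = 1 - 2A$ (the $\gamma$ terms cancel), so that setting $\frac{1}{e^{\epsilon'}+1} = A$ is consistent with the slope as well. Solving $\frac{1}{e^{\epsilon'}+1}=A$ yields $e^{\epsilon'} = \frac{1-A}{A}$; substituting the expression for $A$, together with $1-A = (1-2\gamma)\frac{e^\epsilon}{e^\epsilon+1}+\gamma$, recovers exactly formula~\eqref{Eqn:EPrime}.

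Finally I would feed $\epsilon'$ into Theorem~\ref{thm:singlemean}. The only nontrivial simplification is the factor $\frac{e^{\epsilon'}+1}{e^{\epsilon'}-1}$: from $e^{\epsilon'} = \frac{1-A}{A}$ one gets $e^{\epsilon'}+1 = \frac{1}{A}$ and $e^{\epsilon'}-1 = \frac{1-2A}{A}$, hence $\frac{e^{\epsilon'}+1}{e^{\epsilon'}-1} = \frac{1}{1-2A} = \frac{1}{B} = \frac{e^\epsilon+1}{(1-2\gamma)(e^\epsilon-1)}$. Plugging this into the single-round bound $\frac{m}{\sqrt{2n}}\cdot\frac{e^{\epsilon'}+1}{e^{\epsilon'}-1}\cdot\sqrt{\log\frac{2}{\delta}}$ gives the stated error. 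The main thing to get right is the self-similarity observation $B = 1 - 2A$, which is what makes the perturbed mechanism fall back into the {\sf 1BitMean} family and lets everything reduce to Theorem~\ref{thm:singlemean}; the remaining steps are routine algebra.
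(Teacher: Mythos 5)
Your proposal is correct and follows essentially the same route as the paper: both show that the perturbed bit's distribution is exactly that of {\sf 1BitMean} run with the parameter $\epsilon'$ of~\eqref{Eqn:EPrime} and then invoke Theorem~\ref{thm:singlemean}. The only difference is that you explicitly carry out the algebra the paper leaves implicit (the slope--intercept identity $B=1-2A$, the derivation $e^{\epsilon'}=\frac{1-A}{A}$, and the simplification $\frac{e^{\epsilon'}+1}{e^{\epsilon'}-1}=\frac{e^\epsilon+1}{(1-2\gamma)(e^\epsilon-1)}$), all of which checks out.
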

\begin{proof}
Observe that the distribution produced by combining output perturbation in \eqref{Eqn:OpAlg} with the $\epsilon$-DP {\sf 1BitMean} algorithm in \eqref{equ:meanmech} is given by
\begin{equation}\label{Eqn:Alg}
\hat b_i(t) =
\left\{
\begin{array}{ll}
1,& \mathrm{with \ probability\ } (1-2\gamma)(\frac{1}{e^\epsilon+1}+\frac{x_i(t)}{m}\cdot \frac{e^\epsilon -1}{e^\epsilon+1}) + \gamma; \\
0,& \mathrm{otherwise}.                                                           \\
\end{array}
\right.
\end{equation}
It remains to note that if in formula~(\ref{equ:meanmech}) we use $\epsilon^\prime$ given by~(\ref{Eqn:EPrime}) instead of $\epsilon$, then~(\ref{equ:meanmech}) yields the same distribution as \eqref{Eqn:Alg}. Now to prove Theorem~\ref{Th:OP}, we simply invoke  Theorem \ref{thm:singlemean}.
\end{proof}

\eat{
}


\section{Empirical Evaluation}
\label{sec:exp}

We compare our mechanisms (with permanent memoization) for mean and histogram estimation with previous mechanisms for one-time data collection.
Note that all the mechanisms we compare here provide one-time $\eps$-LDP guarantee; however, our mechanisms provide additional protection for each individual's privacy during the repeated data collection (as introduced in \csecs\ref{Sec:RepeatedCollection}-\ref{sec:opperturbation}).
The goal of these experiments is to show that our mechanisms, with such additional protection, are no worse than or comparable to the state-of-the-art LDP mechanisms in terms of estimation accuracy.

We first use the real-world dataset which is described in Section \ref{SubSubSec:PatternDistribution}.

\begin{figure}[t]
	\center
	\subfigure[$n=0.3\times10^6$]{\label{fig:exp:mean:03}
		\includegraphics[width=0.316\textwidth]{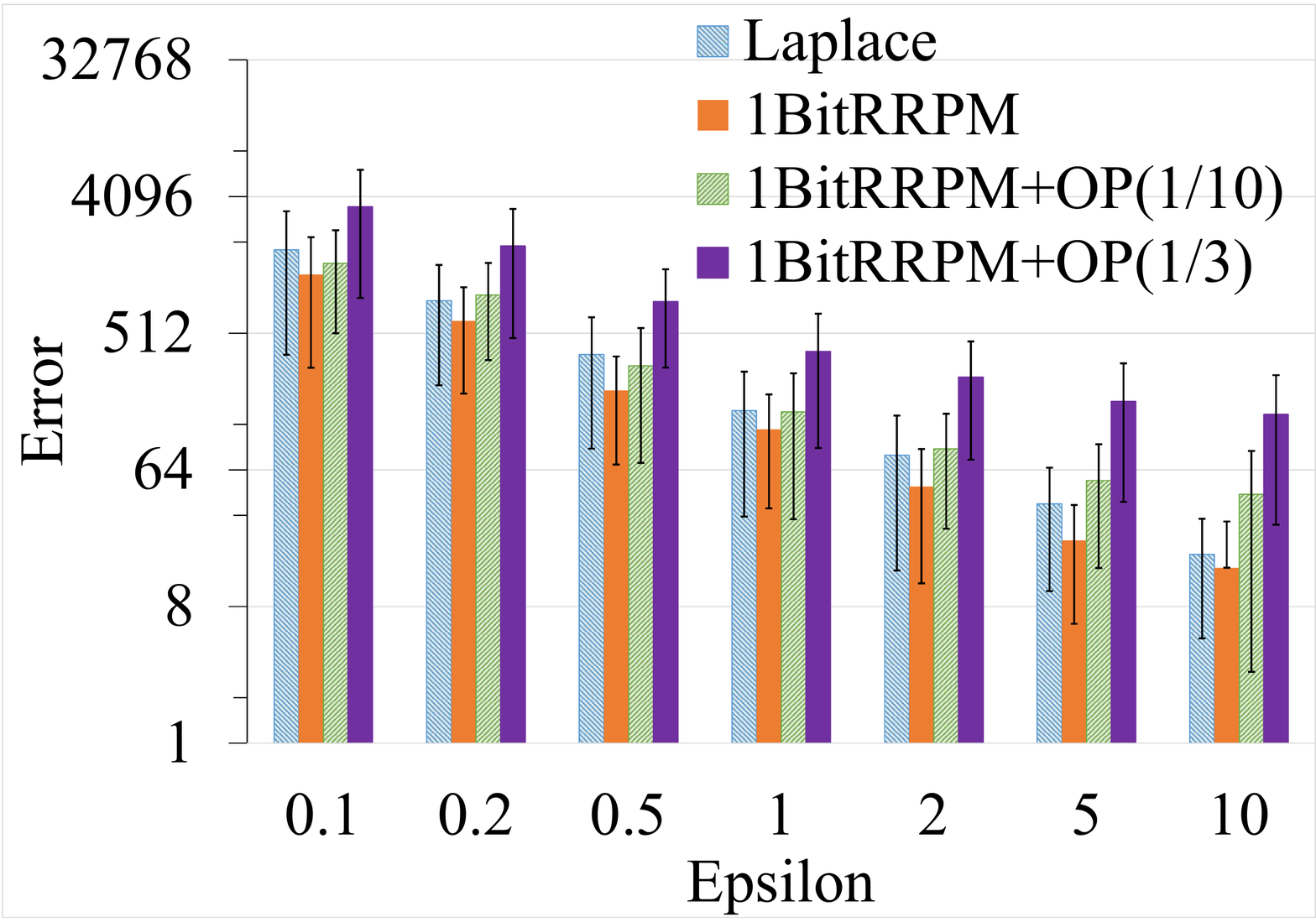}
	}
	\subfigure[$n=1\times10^6$]{
		\includegraphics[width=0.316\textwidth]{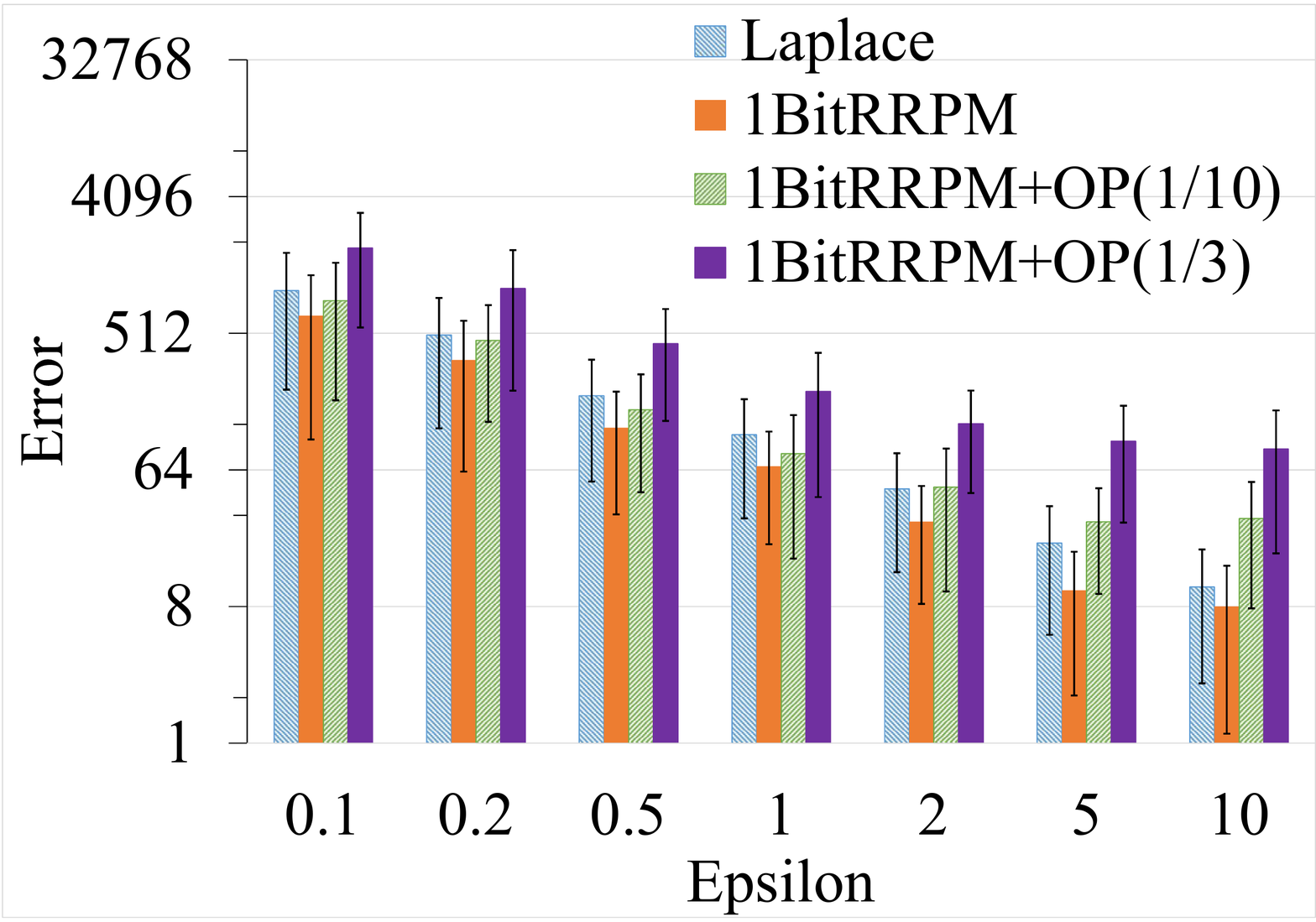}
	}
	\subfigure[$n=3\times10^6$]{
		\includegraphics[width=0.316\textwidth]{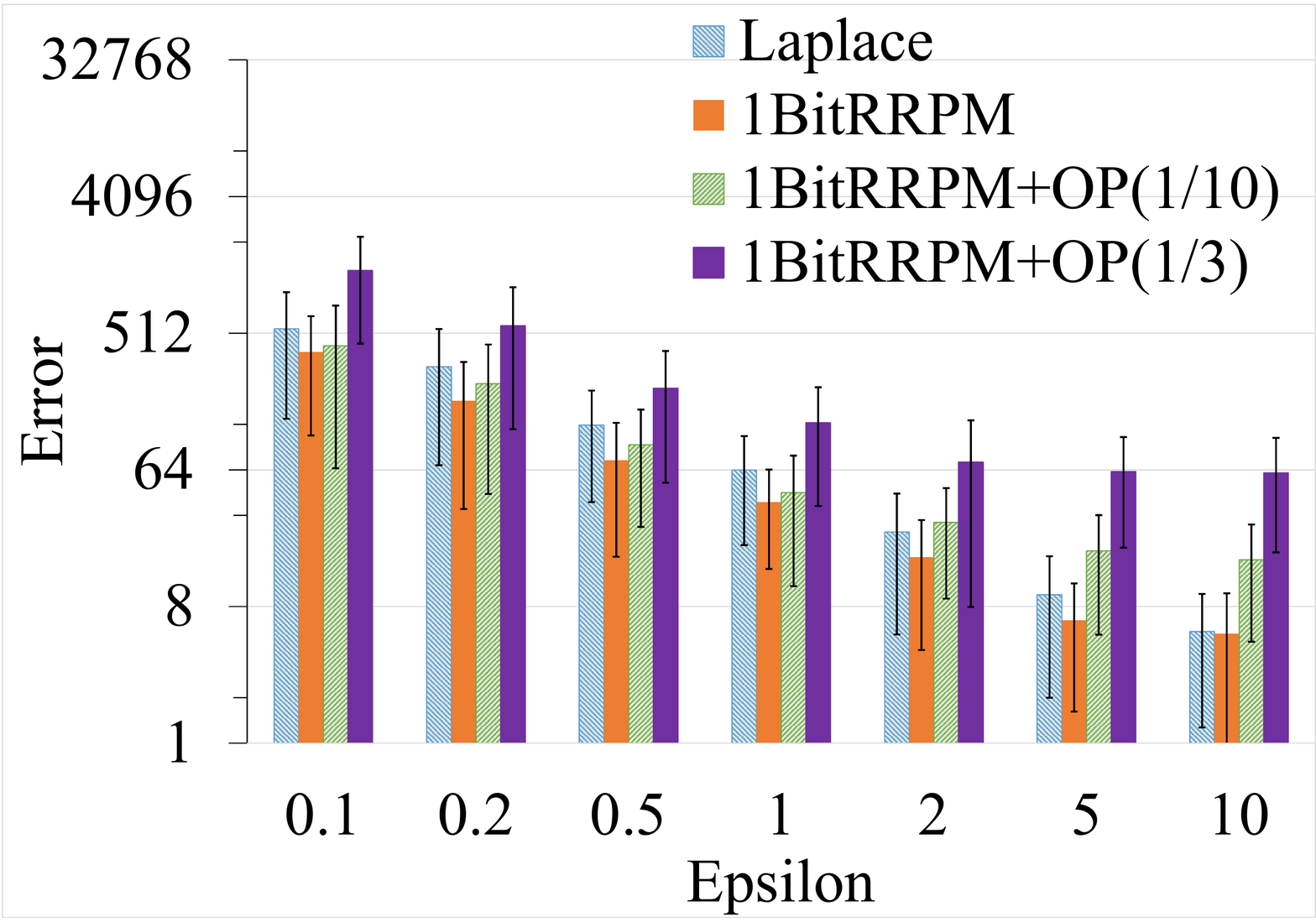}
	}
	\vspace{-0.5cm}
	\caption{Comparison of mechanisms for mean estimation (real-world datasets)}
	\label{fig:exp:mean}
	\vspace{-0.3cm}
\end{figure}

\begin{figure}[t]
	\center
	\subfigure[$n=0.3\times10^6$]{\label{fig:exp:histogram:03}
		\vspace{-0.3cm}
		\includegraphics[width=0.316\textwidth]{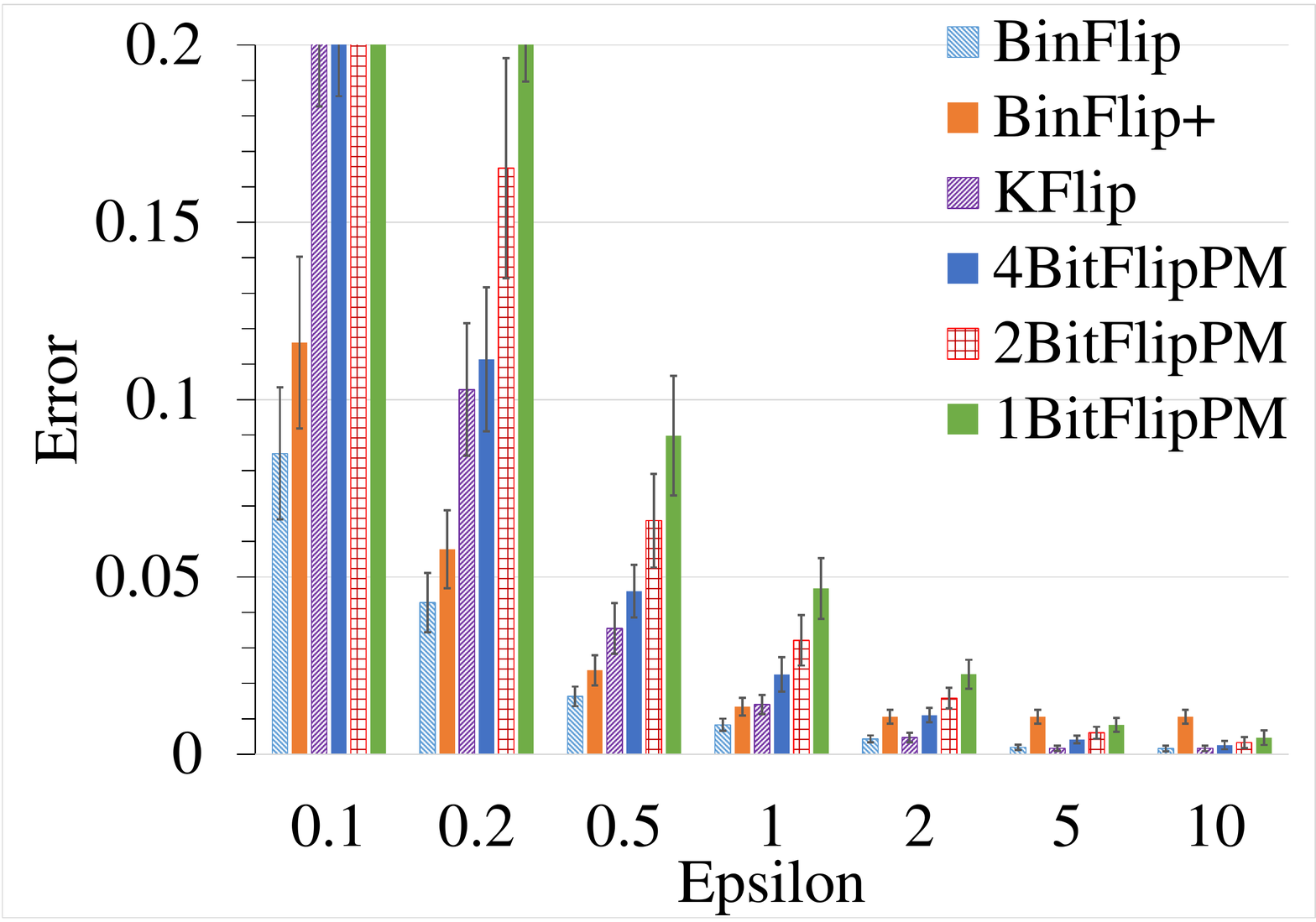}
	}
	\subfigure[$n=1\times10^6$]{
		\vspace{-0.3cm}
		\includegraphics[width=0.316\textwidth]{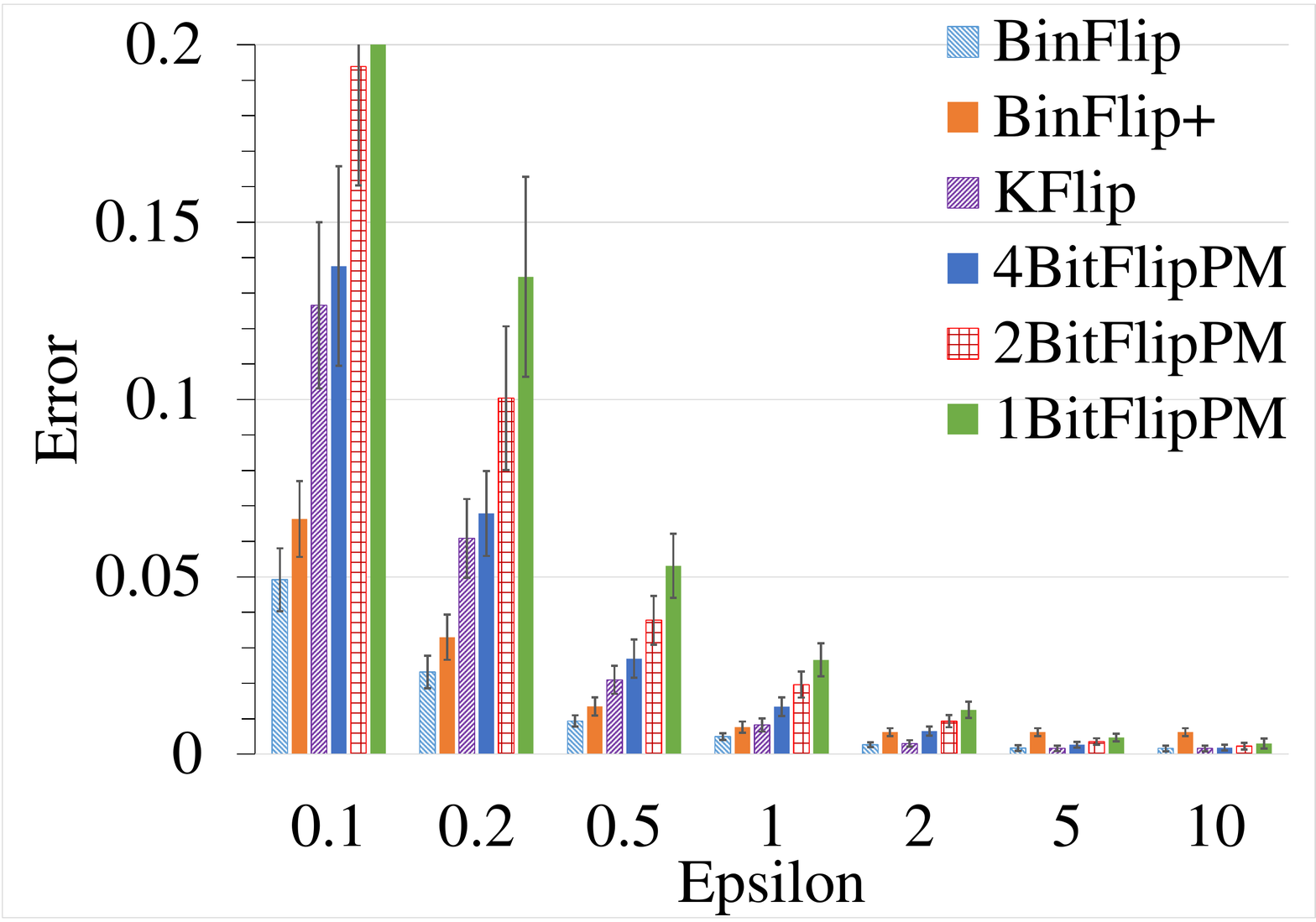}
	}
	\subfigure[$n=3\times10^6$]{
		\vspace{-0.3cm}
		\includegraphics[width=0.316\textwidth]{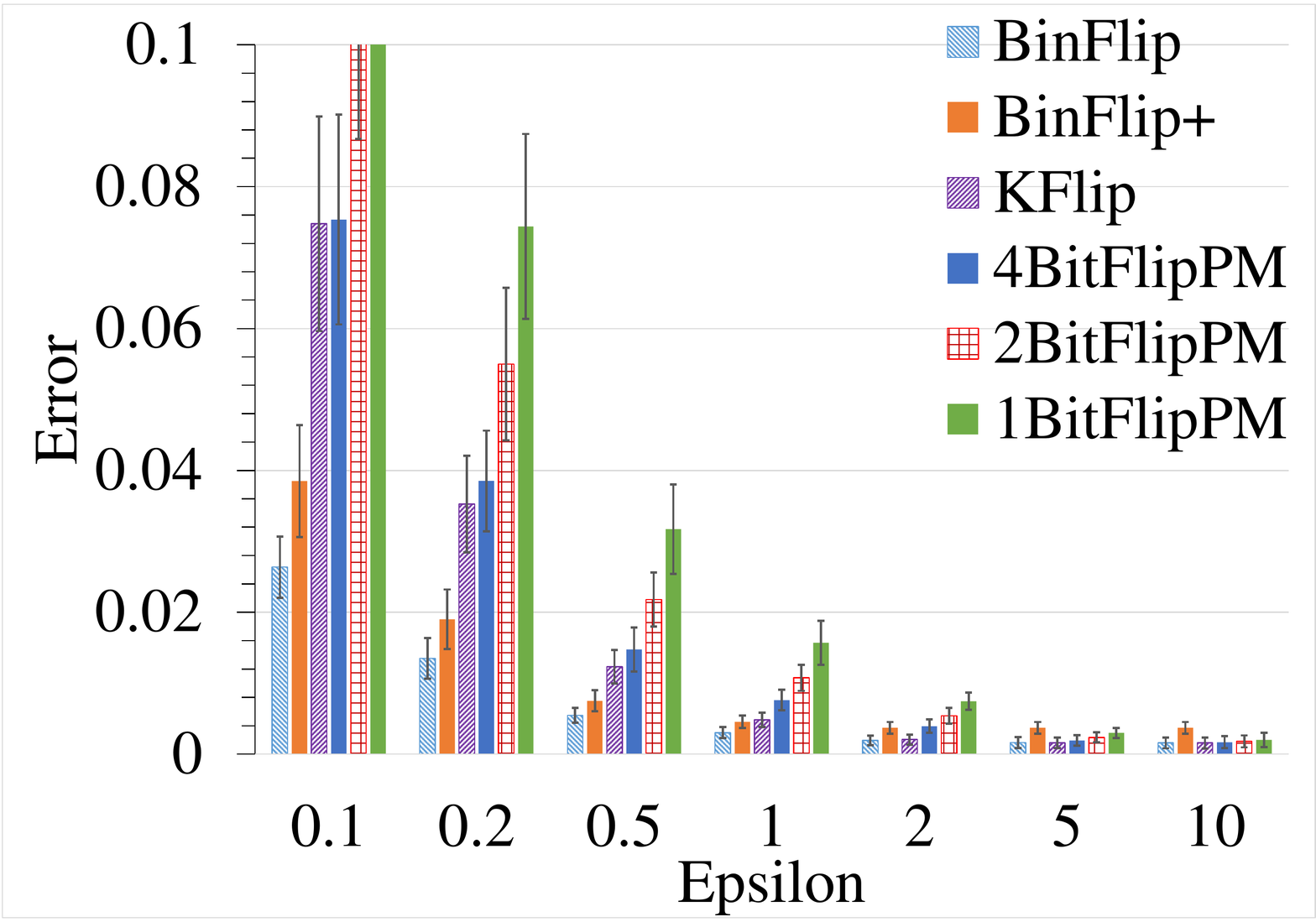}
	}
	\vspace{-0.4cm}
	\caption{Comparison of mechanisms for histogram estimation (real-world datasets)}
	\label{fig:exp:histogram}
	\vspace{-0.3cm}
\end{figure}

\stitle{Mean estimation.} We implement our 1-bit mechanism (introduced in \csec\ref{sec:1bit:mean}) with $\alpha$-point {\sf R}andomized {\sf R}ounding and {\sf P}ermanent {\sf M}emoization for repeated collection (\csec\ref{Sec:RepeatedCollection}), denoted by {\sf 1BitRRPM}, and output perturbation to enhance the protection for usage change (\csec\ref{sec:opperturbation}), denoted by {\sf 1BitRRPM+OP($\gamma$)}. We compare it with the Laplace mechanism for LDP mean estimation in \cite{nips:DuchiWJ13,corr:DuchiWJ16}, denoted by {\sf Laplace}. We vary the value of $\eps$ ($\eps=0.1$-$10$) and the number of users ($n=0.3, 1, 3\times10^6$ by randomly picking subsets of all the 3 million users), and run all the mechanisms 3000 times on the 31-day usage data with three counters. Recall that the domain size is $m=24~\text{hours}$. The average of absolute errors (in seconds) with one standard deviation (STD) are reported in \cfigs\ref{fig:exp:mean}. {\sf 1BitRRPM} is consistently better than {\sf Laplace} with smaller errors and narrower STDs. Even with a perturbation probability $\gamma=1/10$, they are comparable in accuracy. When $\gamma=1/3$, output perturbation is equivalent to adding an additional uniform noise from $[0, 24~{\rm hours}]$ independently on each day to provide very strong protection on usage change--even in this case, {\sf 1BitRRPM+OP(1/3)} gives us tolerable accuracy when the number of users is large.

\stitle{Histogram estimation.} We create $k=32$ buckets on $[0, 24~{\rm (hours)}]$ with even widths to evaluate mechanisms for histogram estimation. We implement our $d$-bit mechanism (\csec\ref{sec:1bit:histogram}) with permanent memoization for repeated collection (\csec\ref{SubSec:repeatedhistogram}), denoted by {\sf $d$BitFlipPM}. In order to provide protection on usage change in repeated collection, we use $d=1,2,4$ (strongest when $d=1$). We compare it with state-of-the-art one-time mechanisms for histogram estimation: {\sf BinFlip} \cite{nips:DuchiWJ13,corr:DuchiWJ16}, {\sf KFlip} \cite{kairouz2016discrete}, and {\sf BinFlip+} (applying the generic protocol with 1-bit reports in \cite{stoc:BassilyS15} on {\sf BinFlip}).  When $d=k$, {\sf $d$BitFlipPM} has the same accuracy as {\sf BinFlip}. {\sf KFlip} is sub-optimal for small $\eps$ \cite{kairouz2016discrete} but has better performance when $\eps$ is $\bigomega{\ln k}$. In contrast, {\sf BinFlip+} has good performance when $\eps \leq 2$.

We repeat the experiment 3000 times and report the average {\em histogram error} (\ie, maximum error across all bars in a histogram) with one standard deviation for different algorithms in \cfig\ref{fig:exp:histogram} with $\eps = 0.1$-$10$ and $n=0.3, 1, 3 \times10^6$ to confirm the above theoretical results. {\sf BinFlip} (equivalently, {\sf 32BitFlipPM}) has the best accuracy overall.

With enhanced privacy protection in repeated data collection, {\sf 4bitFlipPM} is comparable to the one-time collection mechanism {\sf KFlip} when $\eps$ is small ($0.1$-$0.5$); and {\sf 4bitFlipPM}-{\sf 1bitFlipPM} are better than {\sf BinFlip+} when $\eps$ is large ($5$-$10$).

\stitle{On different data distributions.} We have shown that errors in mean and histogram estimations can be bounded (\cthms\ref{thm:singlemean}-\ref{thm:singlehistogram}) in terms of $\eps$ and the number of users $n$, together with the number of buckets $k$ and the number of bits $d$ (applicable only to histograms). We now conduct additional experiments on synthetic datasets to verify that the empirical errors should not change much on different data distributions. Three types of distributions are considered: i) constant distribution, \ie, each user $i$ has a counter $x_i(t) = 12~{\rm (hours)}$ all the time; ii) uniform distribution, \ie, $x_i(t) \sim {\cal U}(0, 24)$; and iii) normal distribution, \ie, $x_i(t) \sim {\cal N}(12, 2^2)$ (with mean equal to $12$ and standard deviation equal to $2$), truncated on $[0, 24]$. Three synthetic datasets are created by drawing samples of sizes $n = 0.3 \times 10^6$ from these three distributions. Results are plotted on \cfigs\ref{fig:exp:mean:synthetic}-\ref{fig:exp:histogram:synthetic} for mean and histogram estimations, respectively, and are almost the same as those in \cfigs\ref{fig:exp:mean:03} and \ref{fig:exp:histogram:03}.

\begin{figure}[t]
	\center
	\subfigure[Constant]{
		\includegraphics[width=0.316\textwidth]{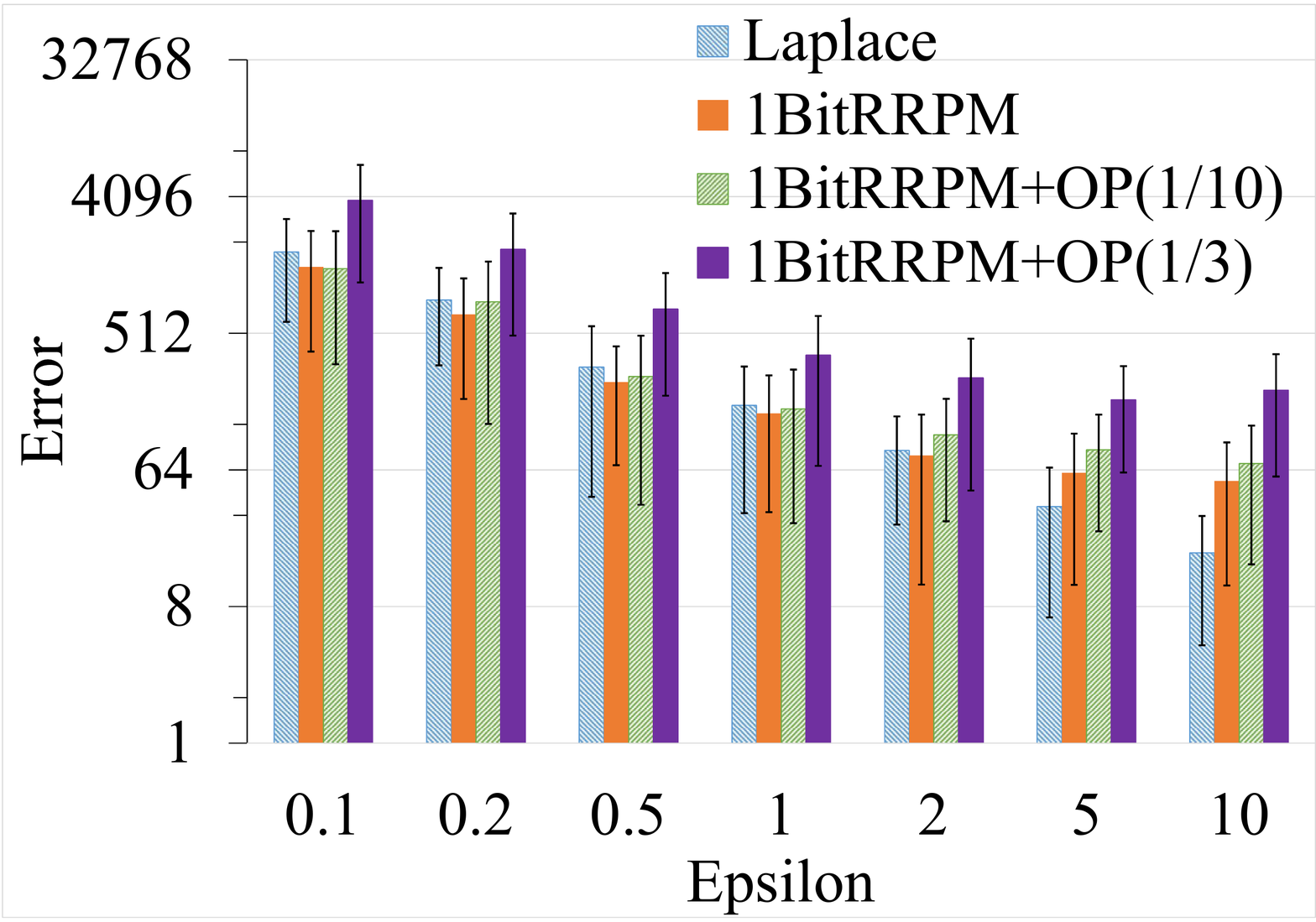}
	}
	\subfigure[Uniform distribution]{
		\includegraphics[width=0.316\textwidth]{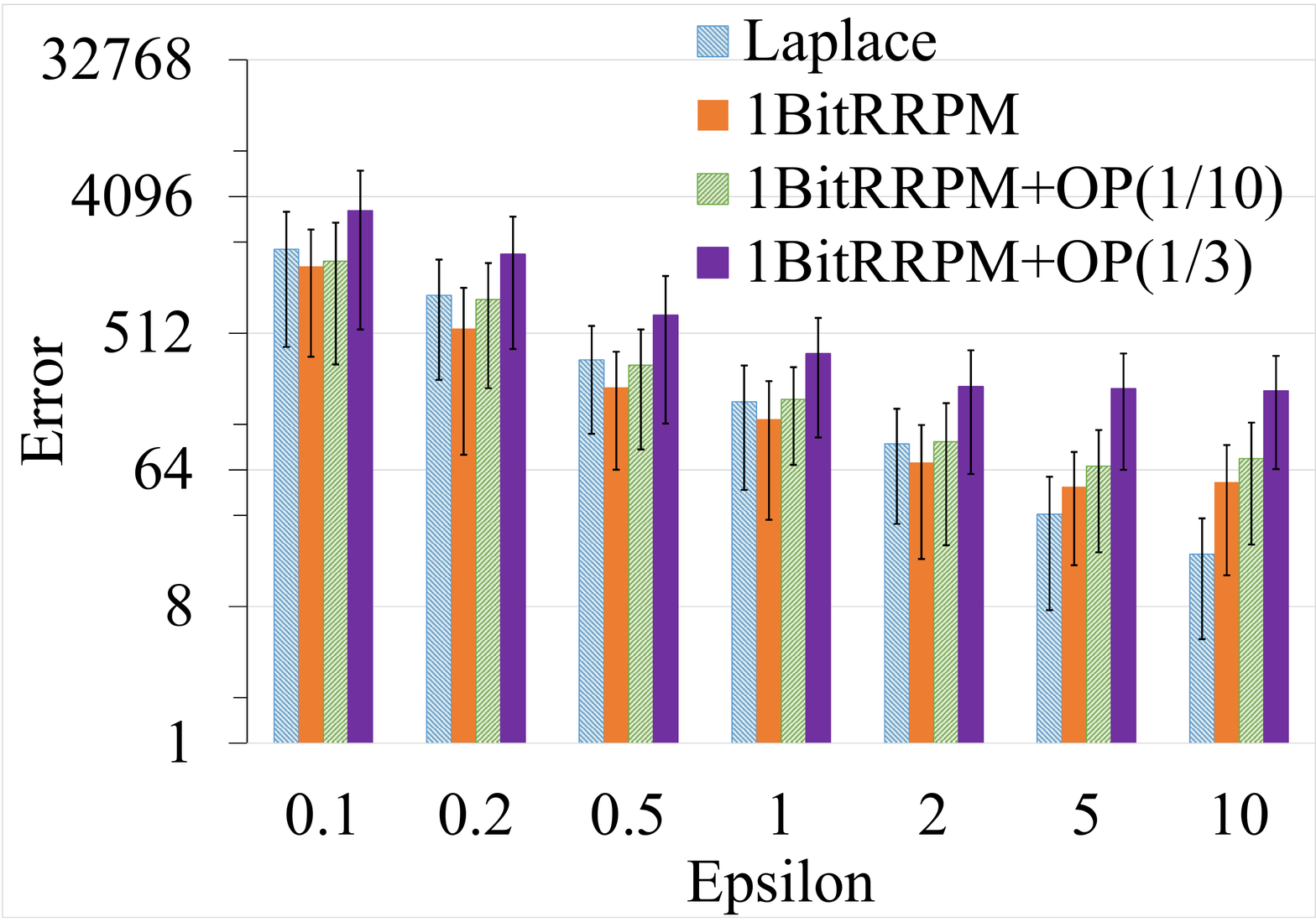}
	}
	\subfigure[Normal distribution (truncated)]{
		\includegraphics[width=0.316\textwidth]{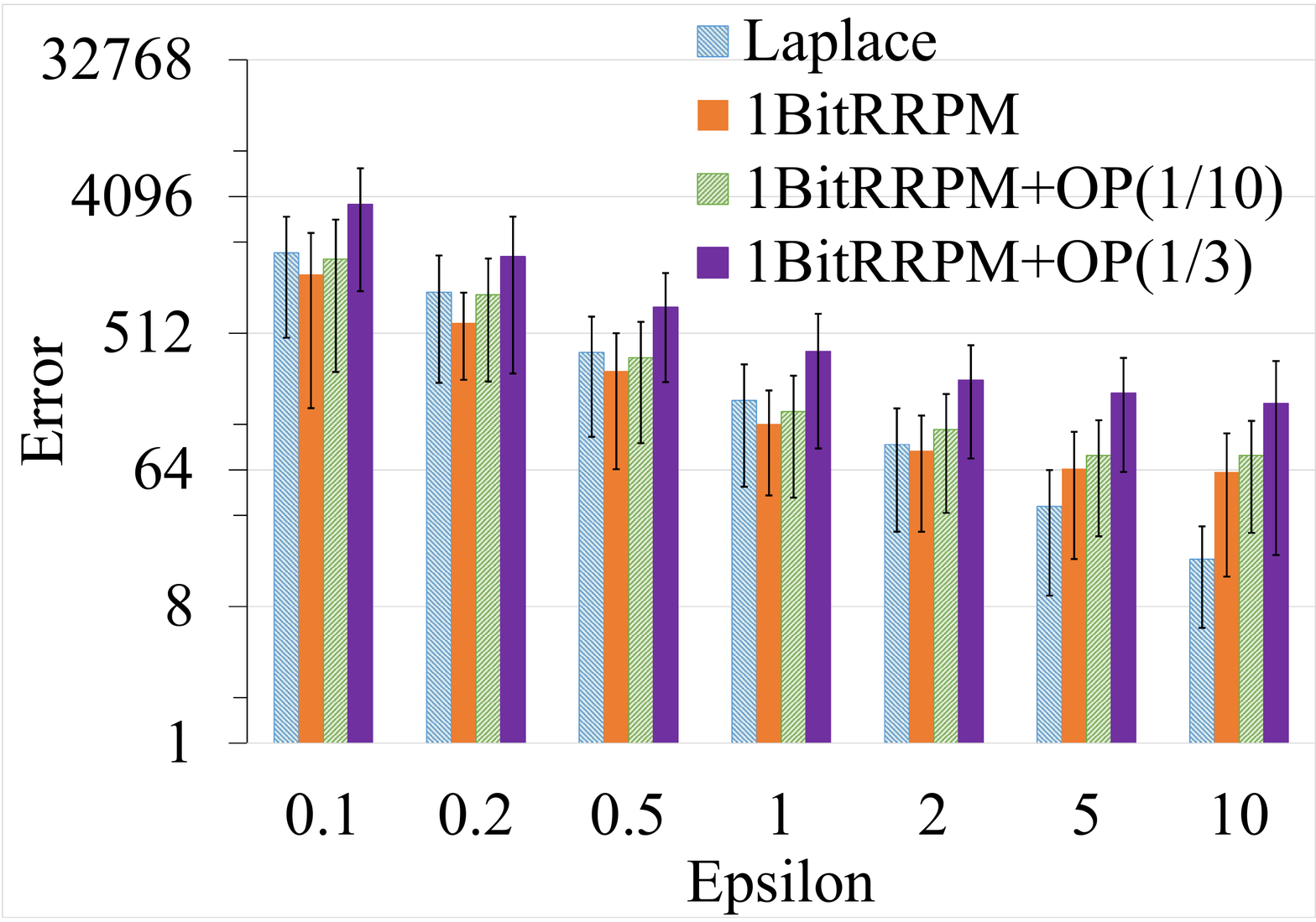}
	}
	\vspace{-0.5cm}
	\caption{Comparison of mechanisms for mean estimation (synthetic datasets: $n=0.3\times10^6$)}
	\label{fig:exp:mean:synthetic}
	\vspace{-0.3cm}
\end{figure}

\begin{figure}[t]
	\center
	\subfigure[Constant distribution]{
		\vspace{-0.3cm}
		\includegraphics[width=0.316\textwidth]{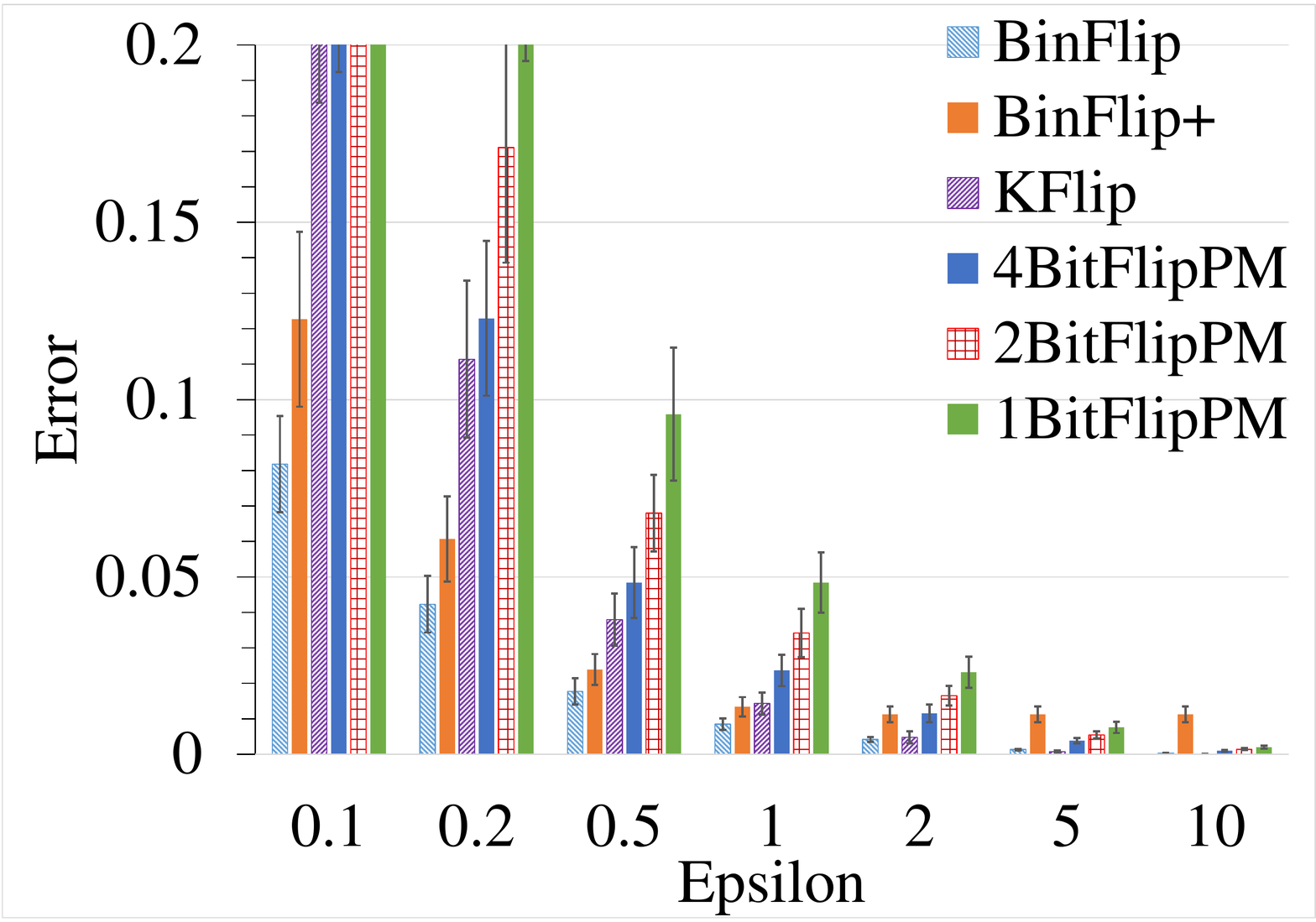}
	}
	\subfigure[Uniform distribution]{
		\vspace{-0.3cm}
		\includegraphics[width=0.316\textwidth]{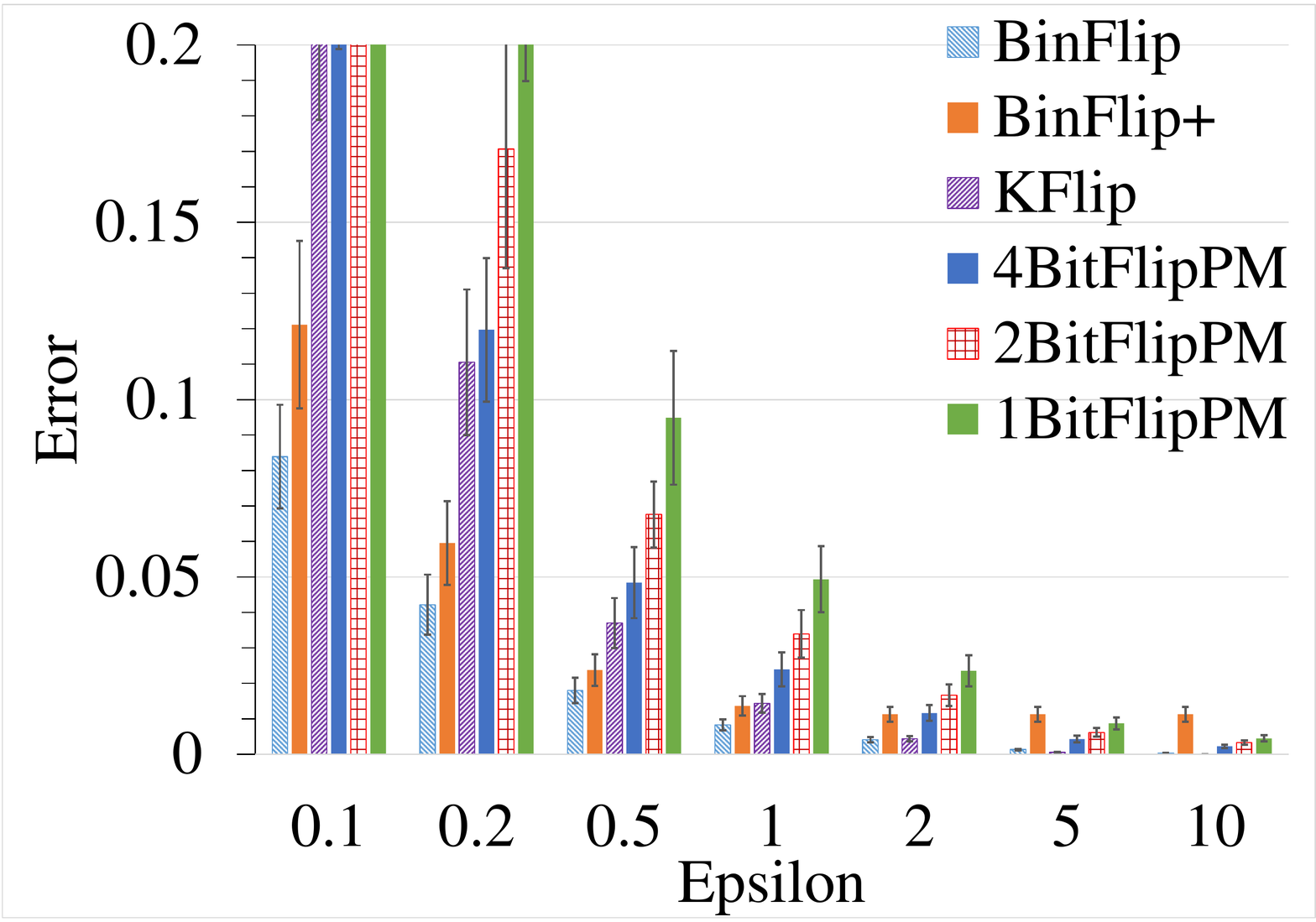}
	}
	\subfigure[Normal distribution (truncated)]{
		\vspace{-0.3cm}
		\includegraphics[width=0.316\textwidth]{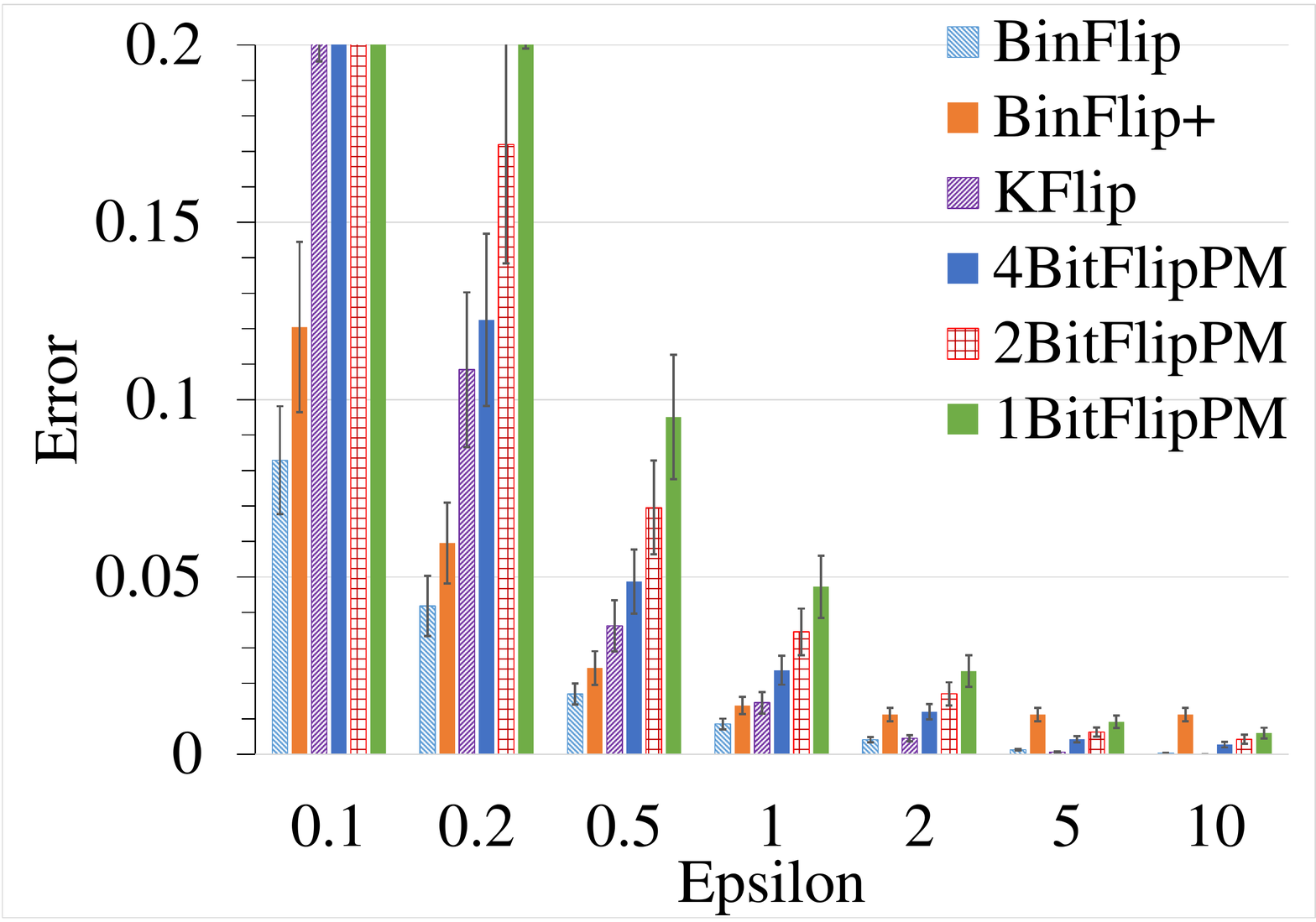}
	}
	\vspace{-0.4cm}
	\caption{Comparison of mechanisms for histogram estimation (synthetic datasets: $n=0.3\times10^6$)}
	\label{fig:exp:histogram:synthetic}
	\vspace{-0.3cm}
\end{figure}

\section{Deployment}\label{Sec:Deployment}
In earlier sections, we presented new LDP mechanisms geared towards repeated collection of counter data, with formal privacy guarantees even after being executed for a long period of time. Our mean estimation algorithm has been deployed by Microsoft starting with Windows Insiders in Windows 10 Fall Creators Update. The algorithm is used to collect the number of seconds that a user has spent using a particular app. Data collection is performed every 6 hours, with $\epsilon=1.$ Memoization is applied across days and output perturbation uses $\gamma=0.2.$ According to Theorem~\ref{Th:OP}, this makes a single round of data collection satisfy $\epsilon^\prime$-DP with $\epsilon^\prime=0.686.$

One important feature of our deployment is that collecting usage data for multiple apps from a single user only leads to a minor additional privacy loss that is independent of the actual number of apps. Intuitively, this happens since we are collecting active usage data, and the total number of seconds that a user can spend across multiple apps in 6 hours is bounded by an absolute constant that is independent of the number of apps.
\begin{theorem}\label{Th:MultiApp}
Using the {\sf 1BitMean} mechanism with a privacy parameter $\tau$ to simultaneously collect $t$ counters $x_1,\ldots,x_t,$ where each $x_i$ satisfies $0\leq x_i\leq m$ and $\sum_i x_i\leq m$ preserves $\tau^{\prime}$-DP, where
\begin{equation}\label{Eqn:TauPrime}
\tau^\prime = \tau+e^{\tau}-1.
\end{equation}
\end{theorem}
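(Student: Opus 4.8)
The plan is to bound the likelihood ratio directly, exploiting independence of the $t$ emitted bits. Since each $b_i$ is drawn by an independent invocation of {\sf 1BitMean} on $x_i$ (with parameter $\tau$), for any output $\mathbf b\in\{0,1\}^t$ and any two admissible inputs $\mathbf x,\mathbf x'$ (each coordinate in $[0,m]$, each with coordinate-sum at most $m$) the joint ratio factorizes as
$$\frac{\Pr[\mathbf b\mid \mathbf x]}{\Pr[\mathbf b\mid \mathbf x']}=\prod_{i=1}^t\frac{\Pr[b_i\mid x_i]}{\Pr[b_i\mid x_i']}.$$
Writing $p=\tfrac{1}{e^\tau+1}$, $q=\tfrac{e^\tau-1}{e^\tau+1}$ and $c_i=x_i/m$, formula~(\ref{equ:meanmech}) (with parameter $\tau$) gives $\Pr[b_i=1\mid x_i]=p+c_iq$ and $\Pr[b_i=0\mid x_i]=(1-p)-c_iq$. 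I would split the product according to whether $b_i=1$ or $b_i=0$ and bound the two groups separately, the point being that the first group is governed by the sum constraint on $\mathbf x$ and the second by the sum constraint on $\mathbf x'$, so the two constraints decouple cleanly.

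For the $b_i=1$ coordinates, dropping the nonnegative term $c_i'q$ from the denominator gives the factor bound $\tfrac{p+c_iq}{p+c_i'q}\le \tfrac{p+c_iq}{p}=1+c_i(e^\tau-1)$, using $q/p=e^\tau-1$. Then $1+a\le e^a$ together with $\sum_i c_i\le 1$ yields $\prod_{i:b_i=1}\bigl(1+c_i(e^\tau-1)\bigr)\le e^{e^\tau-1}$. This is the regime where spreading $\mathbf x$'s mass across many coordinates is worst, which is precisely what produces the $e^\tau-1$ term. For the $b_i=0$ coordinates, dropping the nonnegative term $c_iq$ from the numerator gives $\tfrac{(1-p)-c_iq}{(1-p)-c_i'q}\le \tfrac{1-p}{(1-p)-c_i'q}=\tfrac{1}{1-c_i'(1-e^{-\tau})}$, using $q/(1-p)=1-e^{-\tau}$.

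It then remains to bound $\prod_{i:b_i=0}\tfrac{1}{1-c_i'(1-e^{-\tau})}$ subject to $\sum_i c_i'\le 1$. Taking logarithms turns this into maximizing $\sum_i-\ln\bigl(1-c_i'(1-e^{-\tau})\bigr)$, and the step I expect to be the crux is the following: the map $d\mapsto-\ln\!\bigl(1-d(1-e^{-\tau})\bigr)$ is increasing and \emph{convex} with value $0$ at $d=0$, so over the simplex $\{c_i'\ge0,\ \sum c_i'\le1\}$ the sum is maximized by concentrating all the mass in one coordinate, $c_i'=1$. That extreme evaluates to $-\ln(1-(1-e^{-\tau}))=\tau$, giving $\prod_{i:b_i=0}\le e^{\tau}$ (one checks along the way that no denominator vanishes, since $c_i'=1,\,b_i=0$ yields $e^{-\tau}>0$). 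Multiplying the two group bounds gives $\tfrac{\Pr[\mathbf b\mid\mathbf x]}{\Pr[\mathbf b\mid\mathbf x']}\le e^{e^\tau-1}\cdot e^{\tau}=e^{\tau+e^\tau-1}=e^{\tau'}$, which is the claim.

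The conceptual subtlety I would emphasize is that the two halves behave in opposite ways: the $b_i=1$ half is worst when $\mathbf x$ is \emph{spread out} (contributing $e^\tau-1$ through the $1+a\le e^a$ estimate), whereas the $b_i=0$ half is worst when $\mathbf x'$ is \emph{concentrated} (contributing $\tau$ through the convexity argument). Because the admissibility constraints on $\mathbf x$ and on $\mathbf x'$ are independent, these two extremes can be charged separately, which is exactly what produces the additive exponent $\tau+e^\tau-1$ rather than the naive $t\tau$ of generic composition; constructing matching inputs (one coordinate with $x_0'=m,\,b_0=0$ and the remaining budget of $\mathbf x$ evenly spread) shows the bound is asymptotically tight as $t\to\infty$.
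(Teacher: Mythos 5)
Your proof is correct and takes essentially the same route as the paper's: factor the likelihood ratio over the $t$ independent bits, split according to the output bit, bound each $b_i=1$ factor against input $0$ to obtain $e^{e^\tau-1}$ (with $\sum_i x_i\leq m$), and bound the $b_i=0$ factors by concentrating all of $\mathbf{x}'$'s mass in a single coordinate to obtain $e^\tau$. The only differences are minor refinements: you use the per-factor estimate $1+a\leq e^a$ where the paper equalizes via AM--GM, and your convexity argument supplies a justification for the extremal claim (one $y_i=m$, the rest zero) that the paper merely asserts.
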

\begin{proof}
For $z\in \{0,1\}$ and an integer $0\leq x\leq m,$ let $\mathrm{p}(z\mid x)$ denote the probability that the {\sf 1BitMean} mechanism produces an output $z$ on an input $x,$ as given by~(\ref{equ:meanmech}). Let $x_1,\ldots,x_t$ and $y_1,\ldots,y_t$ be two sets of arbitrary counter values. Here all $\{x_i\}_{i\in [t]}$ and $\{y_i\}_{i\in [t]}$ are non-zero, $\sum_i{x_i}\leq m,$ and $\sum_i{y_i}\leq m.$ Fix some $z_1,\ldots,z_t\in \{0,1\}^t.$
We need to bound
\begin{equation}
R=\prod_{i\in [t]}\frac{\mathrm{p}(z_i\mid x_i)}{\mathrm{p}(z_i\mid y_i)}.
\end{equation}
Let $S=\{i\in [t]\mid z_i=0\}.$ We have
\begin{equation}\label{Eqn:RFull}
R=\prod_{i\in S}     \frac{\mathrm{p}(0\mid x_i)}{\mathrm{p}(0\mid y_i)} \cdot
  \prod_{i\in \bar S}\frac{\mathrm{p}(1\mid x_i)}{\mathrm{p}(1\mid y_i)}.
\end{equation}
Note that $\mathrm{p}(0\mid x_i)\leq \mathrm{p}(0\mid 0).$ Thus using formula~(\ref{equ:meanmech}),
\begin{eqnarray*}
  \prod_{i\in S} \frac{\mathrm{p}(0\mid x_i)}{\mathrm{p}(0\mid y_i)} & \leq &
  \prod_{i\in S} \left[\frac{e^\tau}{e^\tau+1} \cdot \frac{1}{\mathrm{p}(0\mid y_i)} \right] \nonumber \\
  & = &
  \left(\prod_{i\in S}\left[1-\frac{y_i}{m}\cdot \frac{e^\tau-1}{e^\tau}\right]\right)^{-1}. \nonumber \\
\end{eqnarray*}
It remains to note that
\begin{equation}\label{Eqn:R01}
\prod_{i\in S}\left[1-\frac{y_i}{m}\cdot \frac{e^\tau-1}{e^\tau}\right] \geq \frac{1}{e^\tau},
\end{equation}
as the product above is minimized when one of $y_i$ is set to $m$ and the rest are zero. Therefore
\begin{equation}\label{Eqn:R02}
\prod_{i\in S} \frac{\mathrm{p}(0\mid x_i)}{\mathrm{p}(0\mid y_i)} \leq e^\tau.
\end{equation}
We proceed to bound the second product in~(\ref{Eqn:RFull}). Since $\mathrm{p}(1\mid y_i)\geq \mathrm{p}(1\mid 0),$ by~(\ref{equ:meanmech})we have,
\begin{eqnarray*}
  \prod_{i\in \bar S} \frac{\mathrm{p}(1\mid x_i)}{\mathrm{p}(1\mid y_i)} & \leq &
  \prod_{i\in \bar S} \left[(e^\tau+1) \cdot \mathrm{p}(1\mid x_i) \right] \nonumber \\
  & = &
  \prod_{i\in \bar S}\left[ 1+\frac{x_i}{m}\cdot (e^\tau-1) \right]  \nonumber \\
  & \leq &
  \prod_{i\in \bar S}\left[ 1+\frac{1}{|\bar S|}\cdot (e^\tau-1) \right]  \nonumber \\
  & \leq &
  e^{e^\tau-1}. \nonumber \\
\end{eqnarray*}
Combining~(\ref{Eqn:R02}) and the inequality above, we conclude that
\begin{equation}\label{Eqn:RLast}
R\leq e^{\tau}\cdot e^{e^\tau-1},
\end{equation}
which concludes the proof.
\end{proof}
By Theorem~\ref{Th:MultiApp}, in deployment, a single round of data collection across an arbitrary large number of apps satisfies $\epsilon^{\prime\prime}$-DP, where $\epsilon^{\prime\prime}=1.672.$

\bibliographystyle{abbrv}
\bibliography{ref}

\end{document}